\documentclass{amsart}
\usepackage{amssymb}
\usepackage{lmodern}
\usepackage[T1]{fontenc}
\usepackage[utf8]{inputenc}
\usepackage{microtype}
% TODO preamble...
\usepackage[usenames, dvipsnames]{xcolor}
\usepackage[backref=page, colorlinks=true]{hyperref}
\hypersetup{
 linkcolor={red!50!black},
 citecolor={green!50!black},
 urlcolor={blue!80!black}
}
\usepackage{tikz-cd}
%\usepackage{adamsmacros}
%\usepackage{spectralsequences}
%usepackage{comment}
\usepackage[margin=2in]{geometry}

\newcommand{\bea}{\begin{eqnarray}}
\newcommand{\eea}{\end{eqnarray}}
\newcommand{\be}{\begin{equation}}
\newcommand{\ee}{\end{equation}}
\newcommand{\ba}{\begin{aligned}}
\newcommand{\ea}{\end{aligned}}

\usetikzlibrary{shapes.geometric,calc}
\usetikzlibrary{shapes,arrows,chains}
\usetikzlibrary{decorations.markings}
\usetikzlibrary{decorations.pathmorphing}
\usetikzlibrary{shapes.multipart}
\tikzset{snake it/.style={decorate, decoration=snake}}
\tikzstyle{GraphNode}=[circle, draw=black, fill=black, inner sep=2pt, minimum size=5pt]
\tikzstyle{GraphEdge}=[black]
\usepackage{subcaption}
\usepackage{setspace}
\setstretch{1.125}

\usetikzlibrary{calc}
\usetikzlibrary{tikzmark}
\addtolength{\textwidth}{3cm}
\addtolength{\oddsidemargin}{-1.5cm}
\addtolength{\evensidemargin}{-1.5cm}
\addtolength{\textheight}{3.0cm}
\addtolength{\topmargin}{-1.5cm}

\usepackage{comment}
% \usepackage{marginnote}

%%%% Links in the bibliography %%%%

\newcommand\arXiv[1]{\href{http://arxiv.org/abs/#1}{\nolinkurl{arXiv:#1}}}
\newcommand\MRnumber[1]{\href{http://www.ams.org/mathscinet-getitem?mr=#1}{\nolinkurl{MR#1}}}
\newcommand\DOI[1]{\href{http://dx.doi.org/#1}{\nolinkurl{DOI:#1}}}
\newcommand\MAILTO[1]{\href{mailto:#1}{\nolinkurl{#1}}}

%%%% Formatting for Theorems, etc.

\newcounter{mainthm}

\newtheorem{maintheorem}[mainthm]{Theorem}
\newtheorem{dummy}{Dummy}[section]

\newtheorem{lemma}[dummy]{Lemma}
\newtheorem{proposition}[dummy]{Proposition}
\newtheorem{corollary}[dummy]{Corollary}

\newtheorem{theorem}[dummy]{Theorem}
\newtheorem{defn}[dummy]{Definition}

\newtheorem{construction}[dummy]{Construction}
\newtheorem{ansatz}[dummy]{Ansatz}

\theoremstyle{definition}

\newtheorem{rem}{Remark}

\newtheorem{example}{Example}
\newtheorem*{prop*}{Proposition}

%%%% Letters %%%%
\usepackage{dsfont}
\renewcommand\mathbb\mathds

\newcommand\bC{\mathbb C}

\newcommand\Z{\mathbb Z}

\newcommand\cA{\mathcal A}
\newcommand\cB{\mathcal B}
\newcommand\cC{\mathcal C}
\newcommand\cD{\mathcal D}
\newcommand\cE{\mathcal E}
\newcommand\cF{\mathcal F}

\newcommand\cM{\mathcal M}

\newcommand\cT{\mathcal T}

\newcommand\cZ{\mathcal Z}

\newcommand\rC{\mathrm C}

\newcommand\rH{\mathrm H}

\newcommand\SH{\mathrm {SH}}
\newcommand\SO{\mathrm {SO}}

\usepackage[mathscr]{euscript}

\newcommand\fC{\mathfrak C}

\DeclareMathOperator\homology{H}
\renewcommand\H{\homology}

\newcommand{\ko}{\mathit{ko}}

\newcommand{\Sq}{\mathrm{Sq}}

\newcommand{\Spin}{\mathrm{Spin}}

\newcommand{\Mext}{\mathrm{Mext}}

\newcommand{\tsVect}{\mathbf{2sVect}}
\newcommand{\sVect}{\cat{sVect}}

\newcommand{\Witt}{\mathcal{W}itt}
\newcommand{\sBrPic}{\mathscr{B}r\mathscr{P}ic}

\newtheorem{question}[equation]{Question}

%%%% Symbols %%%%

\newcommand\longto\longrightarrow
\newcommand\mono\hookrightarrow
\newcommand\epi\twoheadrightarrow

\newcommand\<\langle
\renewcommand\>\rangle
\newcommand\sminus\smallsetminus

%%%% Words and Categories %%%%

\DeclareMathOperator\End{End}

\newcommand\Rep{\cat{Rep}}

%%%% Some formatting %%%%

\newcommand\cat[1]{\mathbf{#1}}

\title{ Mutual Influence  of Symmetries and Topological Field Theories}

\author{Daniel Teixeira}
 \address{ $^1$ Department of Mathematics, Dalhousie University, Halifax, Nova Scotia }
 
  \author{Matthew Yu}
  \address{ $^2$ Mathematical Institute, University of Oxford, OX2 6GG, Oxford, UK}

\date{}

\begin{document}

\maketitle
 \hspace{1cm}
\begin{abstract}
\noindent
    We study how the fusion 2-category symmetry of a fermionic (2+1)d QFT can be affected when one allows for stacking with TQFTs to be an equivalence relation for QFTs. Focusing on a simple kind of fermionic fusion 2-category described purely by group theoretical data, our results reveal that by allowing for stacking with $\mathrm{Spin}(n)_1$ as an equivalence relation enables a finite set of inequivalent modifications to the original fusion 2-categorical-symmetry. To put our results in a broader context, we relate the order of the symmetry modifications to the image of a map between groups of minimal nondegenerate extensions, and to the tangential structure set by the initial categorical symmetry on the background manifold for the QFT.
\end{abstract}

  \tableofcontents

\section{Introduction}\label{section:intro}
%%%%%%%%%%%%%%%%%%%%%%%%%%%%%%%%%%%%%%%%%%%%%%%%%%

%The subject of higher categories has been important in physics due to the role they play in the context of topological quantum field theories (TQFTs). From the functorial field theory point of view, (extended) TQFT are encoded by fully dualizable objects in a target $n$-category. 
%Furthermore, the work on understanding phases called topological orders has characterized such theories by its extended operators and their interactions, which forms a higher fusion category.

The role of higher categories has become increasingly important in physics due to its foundational role in the study of topological quantum field theories (TQFTs). Within the framework of functorial field theory, (extended) TQFTs are formally described by fully dualizable objects in a target symmetric monoidal $n$-category. In parallel, efforts to understand gapped phases of matter, called topological order,  have revealed that such phases are characterized by the structure of their extended operators. These extended operators, along with their fusion and braiding properties, organize into a higher fusion category.

Another place where higher categories arise is within the framework of our modern understanding of symmetries for quantum field theories. It is believed that the collection of symmetry operators and their composition rules assemble into some higher category. This paper aims to expand on this conventional wisdom by considering what can happen if we allow ourselves the freedom of stacking a quantum field theory by topological quantum field theories. We will give an example where the TQFT can interact with the categorical symmetry in an interesting manner, changing the structure of the symmetry, but leaving the QFT invariant.\footnote{As we will elaborate in a later section, the QFT is technically subtly modified, but the modifications cannot be detected by local observables.}
As we move into higher dimensions and gain access to nontrivial TQFTs, it becomes natural to ask whether we should broaden our notion of ``equivalence for a theory'' to go beyond the traditional scope of automorphisms and stacking by invertible TQFTs. 

The spirit of this question is frequently considered in the context of category theory.  In category theory, it is common to work with different notions of isomorphism between objects—that is, what we mean by saying two objects are equivalent depends on a chosen \textit{valent structure}, or simply ``valence''. While categories are often treated in a univalent setting, where there is a single notion of isomorphism that gives rise to equalities, it can be useful in certain contexts to retain and distinguish valent structures. Indeed, QFTs should also depend on a choice of valent structure, among other data, and different choices can lead to different consequences for the physics. 

Our main physical result demonstrates that by permitting stacking with TQFTs as a legitimate equivalence relation for a QFT, we arrive at a set of categories that are the symmetries of a particular QFT. We consider a (2+1)d setup involving a fermionic QFT i.e. one which couples to a spin structure, whose symmetry is described by a specific type of fermionic fusion 2-category. Examples where bosonic fusion 2-categories have been important for understanding the symmetry properties of physical theories have been extensively explored, see \cite{BBFP:I,BBFP,BBSNT,BSNW,DelT} for examples, and we hope that now is a good time to consider further applications of the fermionic version.
For any fusion 2-category $\mathfrak{C}$, there is a braided fusion 1-category $\Omega\mathfrak{C}$ of endomorphisms of the monoidal unit. 
Its so called M\"uger center $\mathcal{Z}_{(2)}(\Omega \mathfrak{C})$ is a symmetric fusion 1-category, and we say that $\fC$ is a \textit{fermionic fusion 2-category} if  $\mathcal{Z}_2(\Omega \mathfrak{C})$ is super-Tannakian, and bosonic if it is Tannakian. We will be working with
 fermionic fusion 2-categories $\fC$ such that $\Omega \fC$ is category $\sVect$ of super vector spaces. Such fusion 2-categories are known as \textit{fermionic strongly fusion 2-categories}\cite{Johnson-Freyd:2020ivj}. 
Physically this means that the only lines are the vacuum $1$ and the invisible (local) fermion $\psi$, however there can be nontrivial surface operators, obeying fusion rules governed by a finite group and realizing a 0-form symmetry of the theory.

 General fermionic fusion 2-categories behave subtly different from the bosonic ones. This has been noticed already in \cite[Remark 4.1.7]{decoppet2022drinfeld} when studying fermionic fusion 2-categories up to Morita equivalence. Furthermore, in the classification of fermionic fusion 2-categories, part of the parametrization data for such categories included a torsor in supercohomology  \cite[Theorem B]{Decoppet:2024htz}. This fact stems from there being no canonical choice of minimal non-degenerate extension (MNE) for $\sVect$. Motivated by this fact the main results of this work answers the following question:

\begin{question}
    What physical information is associated to the torsor in the classification of fermionic fusion 2-categories?
\end{question}

By taking the ``symmetries'' point of view for categories, which inextricably links them with quantum field theories, we can provide a physical explanation for why it is inevitable for the torsor to appear in the classification data. Conversely, the appearance of the torsor as a mathematical facet also encapsulates some subtle but essential information about the fermionic quantum field theory concerning the central charge associated to the pair $\{1,\psi\}$. 

It turns out that fermionic strongly fusion 2-categories provide a powerful and tractable framework for addressing this question.  Notably, these 2-categories offer two key advantages: they admit a relatively simple classification in terms of group-theoretic and (super)cohomological data, and they are rich enough to capture the subtle and nontrivial interplay between symmetries and stacking with TQFTs in the setting of fermionic theories. Our main focus will be to expand on the subtle interplay between the fermionic strongly fusion 2-category (the symmetry structure), the 2-category of modules of a non-degenerate braided fusion 1-category (the TQFT with condensation defects), and the spin structure (a property of the background manifold). 
Other applications of stacking by (2+1)d TQFTs with nontrivial Witt class have been considered in \cite{Bhardwaj:2024qiv,Bhardwaj:2024xcx,Bhardwaj:2025piv,Bhardwaj:2025jtf}. Looking ahead to higher dimensions, it will likely be the case that the fermionic setting will also present more subtleties due to all the data in the lower dimensional TQFTs that  needs to be included, as well as all of the interactions with the symmetries. 

%Furthermore, as the category level rises the super-Witt group of \cite{davydov2013structure} is bound to appear in the generalized cohomology theory that classifies higher categories \matt{refs}.

%%%%%%%%%%%%%%%%%%%%%%%%%%%%%%%%%%%%%%%%%%%%%%%%%%
\subsubsection*{Summary of Results}
%%%%%%%%%%%%%%%%%%%%%%%%%%%%%%%%%%%%%%%%%%%%%%%%%%

We start off with a fermionic strongly fusion 2-category $\fC$, parametrized by the data of a finite group $G_b$, a nontrivial $\Z/2$ extension given by a class $\kappa \in \H^2(BG_b;\Z/2)$, and a class in twisted supercohomology $\varpi \in \SH^{4+\kappa}(BG_b)$. As we will elaborate in Definition \ref{def:supercohclass}, a cochain representative of $\varpi$ can be written as a triple $(\alpha,\beta,\gamma)$ where $\alpha$ and $\beta$ are $\Z/2$ valued cochains and $\gamma$ is a $\mathbb{C}/\Z$ valued cohain. Each cocycle satisfies an equation with respect to the differential.
This data constructs fermionic strongly fusion 2-categories as a $G_b$-graded extension of $\tsVect$.

We choose a representation of the $\sVect$ inside $\fC$ by any of the invertible fermionic theories described by the category $\SO(n)_1$ for $n$ is an integer mod 16. As categories, $\SO(n)_1$ for each $n$ are all monoidally equivalent and comprise of the same simple objects as $\sVect$ namely 1 and $\psi$, but differ in their central charge which is given by $\frac{n}{2} \mod 8$. \footnote{In fact mathematically, often $\sVect$ is used to represent the invertible fermionic theory, but this loses out on the information of the central charge. A reason for this is because in the classification of TQFTs, the equivalence is taken up to changing the central charge i.e. up to stacking with invertible field theories.} While for a bosonic theory the choice of stacking by $\{1,\psi\}$ turns it into a fermionic one, stacking by $\{1,\psi\}$ in a fermionic theory has no affect other than to shift by a gravitational counter term.
This information is easy to miss since it can only be detected by placing the theory on curved background, and would depend on what one takes as the condition for two QFTs to be equivalent. We iterate the following symmetry modifying procedure, which we will refer to as ``stack and condense'' later on in this paper: 

\begin{construction}\label{construction:F2Cmodify}
\begin{enumerate}
    \item On a theory $\mathcal{T}$ with a fermionic strongly fusion 2-category symmetry $\fC$, we stack by the 2-category of modules of a minimal non-degenerate extension of $\sVect$ given by $\Spin(n)_1$ for some $n$. This is a category with three objects if $n$ is odd and four objects when $n$ is even. In both cases there is a simple object  with quantum dimension 1 which we denote $f$, that has spin $\frac{1}{2}$. By taking the composite $\psi \boxtimes f$ this creates a boson, which we condense.
    \item Upon condensing the boson we arrive at a different realization of the symmetry $\fC$. We then compute the shift of the cochain representation of $\varpi$ due to the condensation. The shift must be compatible with how $\varpi$ behaves with respect to acting by a differential. 
\end{enumerate}
\end{construction}

The two steps can be summarized in Figure \ref{fig:twostep}:
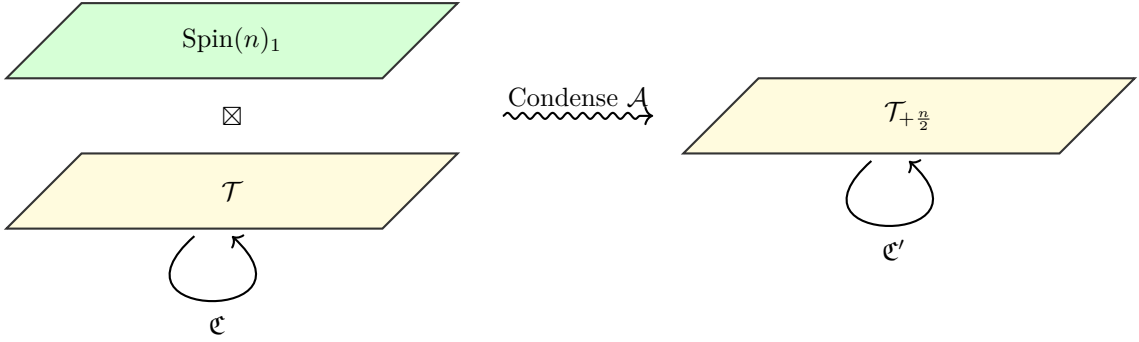
\begin{figure}
    \centering
\begin{tikzpicture}[thick]
\draw[black, fill=yellow!20,opacity=0.8] (3,1) -- (8,1) -- (9,2) -- (4,2) -- cycle;
\draw[black, fill=green!20,opacity=0.8] (3,3) -- (8,3) -- (9,4) -- (4,4) -- cycle;
\node at (6,1.5) {$\mathcal{T}$};
\node at (6,2.5) {$\boxtimes$};
\node at (6,3.5) {$\Spin(n)_1$};
    \draw[<-,decorate]  (6,.9) .. controls (7.25,-.25) and (4.2,-.25) .. (5.5,.9);
      \node at (5.8,-0.28) {$\mathfrak C$};
        \draw[<-,decorate]  (6+9,.9+1) .. controls (7.25+9,-.25+1) and (4.2+9,-.25+1) .. (5.5+9,.9+1);
      \node at (5.8+9,-0.28+1) {$\mathfrak{C}'$};
 \draw[->,decorate,decoration={snake,amplitude=.4mm,segment length=2mm,post length=1mm}] (9.6,2.5) -- (11.6,2.5) node[midway, above] {Condense $\mathcal{A}$};
 \draw[black, fill=yellow!20,opacity=0.8] (3+9,1+1) -- (8+9,1+1) -- (9+9,2+1) -- (4+9,2+1) -- cycle;
 \node at (6+9,1.5+1) {$\mathcal{T}_{+\frac{n}{2}}$};
\end{tikzpicture}
    \caption{Left: The figure shows a stacking of a $\Spin(n)_1$ TQFT onto a fermionic (2+1)d QFT  given by $\mathcal T$, which has an action by the fermionic strongly fusion 2-category $\fC$.
    Right: After condensing the algebra $\mathcal{A} = 1\boxtimes1 \oplus (\psi \boxtimes f)$ the TQFT is trivialized and the theory $\cT$ is the same up to a shift of the central charge coming from $\Spin(n)_1$ i.e. a gravitational counter term. We thus denote the theory after condensing $\cA$ to be given by $\cT_{+\frac{n}{2}}$ in order to reflect this fact. $\cT_{+\frac{n}{2}}$ then has an action by the category $\mathfrak{C}'$ which has the same operators as the fusion 2-category $\fC$, but with a potentially different monoidal structure.}
    \label{fig:twostep}
\end{figure}

\begin{maintheorem}(Theorem \ref{prop:shift})\label{prop:mainA}
    The periodicity of $\varpi$ depends on the following properties of $G_b$ and $\kappa$.
    \begin{itemize}
    \item if $\kappa= 0$: $\varpi$ experiences no shift under the stacking and condensing procedure;
        \item  if $\kappa\neq 0$ and $\Sq^1\kappa=0$: $\varpi$ experiences a 2-periodic shift;\footnote{In particular a single iteration sends
        \begin{equation*}
            (\alpha, \beta, \gamma) \mapsto (\alpha+\kappa, \,\beta+\kappa \cup_1 \alpha, \, \gamma+G_\kappa(\alpha))
        \end{equation*}
        where $G_\kappa(\alpha)$ is some $\mathbb{C}/\Z$ valued function such that $G_\kappa(\alpha)+G_\kappa(\alpha+\kappa) =0$.}
        \item  if  $\kappa \neq 0$ and $\Sq^1 \kappa \neq 0$: $\varpi$ experiences a  4-periodic shift.
    \end{itemize}
\end{maintheorem}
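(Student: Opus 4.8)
\emph{Proof strategy.} The plan is to reduce the statement to the effect of a single application of the ``stack and condense'' move of Construction~\ref{construction:F2Cmodify} on a cochain representative $(\alpha,\beta,\gamma)$ of $\varpi$, and then to iterate. First I would record, from Definition~\ref{def:supercohclass}, the twisted cocycle equations satisfied by such a triple --- $\alpha\in C^2(BG_b;\Z/2)$, $\beta\in C^3(BG_b;\Z/2)$, $\gamma\in C^4(BG_b;\bC/\Z)$ --- namely $d\alpha=0$, $d\beta=\alpha\cup\alpha+\kappa\cup\alpha$, and a $\bC/\Z$-valued equation $d\gamma=\tfrac12 Q(\alpha,\beta)$ for the pertinent quadratic refinement $Q$. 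Next I would check that the move leaves the pair $(G_b,\kappa)$ intact: stacking with $\Mod(\Spin(n)_1)$ and condensing $\cA=1\boxtimes1\oplus\psi\boxtimes f$ produces a fusion $2$-category $\fC'$ whose endomorphisms of the unit form $(\sVect\boxtimes\Spin(n)_1)^{\mathrm{loc}}_{\cA}$, of global dimension $\tfrac{2\cdot4}{2^2}=2$, which retains the transparent fermion of twist $-1$ carried by the image of $\psi\boxtimes1\simeq1\boxtimes f$ and hence is $\sVect$; moreover the component group of the invertible surface operators and its $\Z/2$-extension class survive a condensation supported in the identity-graded summand. So $\fC'$ is again a fermionic strongly fusion $2$-category with the same $(G_b,\kappa)$, and the theorem is the computation, iterated, of the induced change of $\varpi$.

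I would then identify this change conceptually. The only datum of $\fC'$ not yet fixed is the minimal nondegenerate extension of $\sVect$ used to present it, and the condensation replaces the reference extension by its product with $\Spin(n)_1$; so the change of $\varpi$ is governed by the image of $[\Spin(n)_1]$ under a map between the relevant groups of minimal nondegenerate extensions (cf.\ the abstract), landing in $\SH^{4+\kappa}(BG_b)$. Its leading part is $\alpha\mapsto\alpha+\kappa$: re-choosing the minimal nondegenerate extension toggles, on each graded piece, exactly the bookkeeping of which lines carry the fermion, which is precisely what $\kappa$ records, so the connecting map of the $\Z/2$-extension sends the generator to $\kappa$ in the $\alpha$-slot. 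In particular, when $\kappa=0$ every slot of the shift vanishes --- $\alpha\mapsto\alpha$, $\kappa\cup_1\alpha=0$, and $G_0$, which is assembled from pairings against the $\kappa$-twisted cycles, is zero --- so there is no shift, which is the first bullet.

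The main work, and the step I expect to be the real obstacle, is to pin down the corrections when $\kappa\neq0$ and obtain the displayed formula $(\alpha,\beta,\gamma)\mapsto(\alpha+\kappa,\,\beta+\kappa\cup_1\alpha,\,\gamma+G_\kappa(\alpha))$. The cup-$1$ term is forced by compatibility with the differential: after replacing $\alpha$ by $\alpha+\kappa$ the right-hand side of the $\beta$-equation becomes $(\alpha+\kappa)\cup(\alpha+\kappa)+\kappa\cup(\alpha+\kappa)=\alpha\cup\alpha+\alpha\cup\kappa$ mod $2$, while $d\beta$ is unchanged, and the Hirsch identity $d(\kappa\cup_1\alpha)=\kappa\cup\alpha+\alpha\cup\kappa$ shows that $\beta+\kappa\cup_1\alpha$ restores the equation; categorically this cochain records the re-association of the new module data along the $G_b$-grading. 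The $\bC/\Z$-valued term $G_\kappa(\alpha)$ is then determined, up to the stated ambiguity, by requiring the $\gamma$-equation to persist --- this is exactly the compatibility with the differential demanded in step~(2) of Construction~\ref{construction:F2Cmodify} --- and its actual value is read off from the spin $\tfrac12$ of $f$ and the central charge $\tfrac n2$ of $\Spin(n)_1$ entering the quadratic refinement; I would then verify directly the symmetry $G_\kappa(\alpha)+G_\kappa(\alpha+\kappa)=0$, which reflects that the $\gamma$-level ambiguity of the $\sVect$-framing is $2$-torsion. The delicate point throughout is to match the categorical data of the condensation with the supercohomological cochains term by term; once that dictionary is in hand the rest is formal.

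With the single-iteration formula established, the periodicity is bookkeeping. Writing $(\alpha_k,\beta_k,\gamma_k)$ for the $k$-fold iterate, one finds $\alpha_k=\alpha+k\kappa$, $\beta_k=\beta+k(\kappa\cup_1\alpha)+\binom{k}{2}(\kappa\cup_1\kappa)$, and $\gamma_k=\gamma+\sum_{0\le j<k}G_\kappa(\alpha+j\kappa)$, all modulo the relevant coefficients. Since $\kappa\cup_1\kappa$ represents $\Sq^1\kappa$ and $G_\kappa(\alpha)+G_\kappa(\alpha+\kappa)=0$, two iterations give $(\alpha,\,\beta+\Sq^1\kappa,\,\gamma)$. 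If $\Sq^1\kappa=0$ then $\kappa\cup_1\kappa=d\mu$ for some $\mu\in C^2(BG_b;\Z/2)$, and a direct check shows $(0,d\mu,0)$ is a coboundary in $\SH^{4+\kappa}(BG_b)$ --- the shift of $\gamma$ forced by the gauge parameter $\mu$ is absorbed using the same cup-$1$ identities --- so $\varpi$ returns to itself after two steps. If $\Sq^1\kappa\neq0$ then $\beta+\Sq^1\kappa$ is not cohomologous to $\beta$ in $\SH^{4+\kappa}(BG_b)$ relative to the fixed $\alpha$, so the shift genuinely has order $2$ there; but $\beta_4=\beta+4(\kappa\cup_1\alpha)+6(\kappa\cup_1\kappa)=\beta$ already at the cochain level, while $\alpha_4=\alpha$ and $\gamma_4=\gamma$, so the shift is $4$-periodic. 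This exhausts the three cases.
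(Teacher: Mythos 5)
Your cochain-level analysis of $\beta$ and $\gamma$ and the periodicity bookkeeping essentially reproduce the paper's own proof: the correction $\Delta\beta=\kappa\cup_1\alpha$ forced by compatibility with $d\beta=(\Sq^2+\kappa)\alpha$ via the Hirsch identity, the appearance of $\kappa\cup_1\kappa=\Sq^1\kappa$ after two iterations, and the resulting $2$- versus $4$-periodicity are exactly the computation in the paper's proof of Theorem \ref{prop:shift}. Your absorption of $\kappa\cup_1\kappa=d\mu$ by a supercohomology coboundary when $\Sq^1\kappa=0$ is, if anything, slightly more careful than the paper, which works with cocycle representatives; and your deferred claims about $G_\kappa(\alpha)$ (that it depends only on $\alpha$ and satisfies $G_\kappa(\alpha)+G_\kappa(\alpha+\kappa)=0$) are at a comparable level of detail to the paper's own treatment of $\gamma$, which argues by closedness of the accumulated shift plus a universality claim about which closed degree-$4$ expressions in $\alpha,\kappa$ can occur.

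The genuine gap is at the pivotal step, the shift $\alpha\mapsto\alpha+\kappa$, on which the entire single-iteration formula rests. You assert it (``re-choosing the minimal nondegenerate extension toggles \ldots exactly the bookkeeping of which lines carry the fermion, which is precisely what $\kappa$ records, so the connecting map \ldots sends the generator to $\kappa$ in the $\alpha$-slot''), but this is precisely what must be proved: a priori the extension class of the group of surface operators could be unchanged, or could shift by some class other than $\kappa$. Your gloss also conflates the two classes: it is $\alpha$, not $\kappa$, that records the $\Z/2$-extension structure on the objects, while $\kappa$ records the $G_b$-action (the fractionalization datum); the content of the theorem is exactly that the latter feeds into the former under stack-and-condense. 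The paper supplies the missing mechanism in two steps: first, when $\Spin(n)_1$ is stacked on $\cT$, the $G_b$-action must fractionalize on the $\Z/2$ one-form symmetry generated by $f$ via the \emph{same} class $\kappa$ that twists $\sVect$ --- this is forced because fractionalization changes the spin of $f$ \cite{Hsin:2019gvb}, and $\psi\boxtimes f$ must remain a boson for the condensation of $\cA$ to be admissible; second, upon condensing $\cA$ the fractionalization class $\kappa$ is converted into extension data for $G_b$ by the dual $\Z/2$ zero-form symmetry \cite{Yu:2020twi,Tachikawa:2017gyf}, so the total extension class becomes $\alpha+\kappa$. Without an argument of this kind your leading-order claim is a restatement of the conclusion rather than a derivation, and the rest of your (otherwise correct) computation has nothing to propagate.
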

   
 At the end of each iteration of stacking and condensing the theory $\cT$ remains the same since the TQFT that was stacked on has been condensed away, but the TQFT does contribute a gravitational counter term due to the fact that the $\Spin(n)_1$ theory that was stacked has a nontrivial central charge.
QFTs that differ by ``central charge'' are isomorphic, and hence the QFTs from before and after stacking by $\Spin(n)_1$ and condensing are equal in the sense that the TQFT has not changed any of the operator content of the original theory.
However, the symmetries are different due to the cocycle representative of the cohomology class being genuinely different. As we will explain in \S\ref{section:fermionicsym}, $\alpha$ controls the extension of $G_b$ with $\Z/2$, so by setting $\kappa=\alpha$ then we can trivialize the extension and hence change the fusion rules of the symmetry defects.
Thus we conclude that the same fermionic theory can have a set of inequivalent descriptions of its categorical symmetries, and the TQFTs acts on the operator content so in a way that they transform under a different symmetry.

The second part of our results puts the shift of  $\varpi$ into a larger context, with the aim to offer a conceptual explanation for the periodicity of the shifts. We consider the (3+1)d TQFT associated to fermionic strongly fusion 2-categories. The one specifically associated to $\tsVect$, is a TQFT of a dynamical spin structure, and constructed out of the Drinfeld center $\cZ(\tsVect)$. This theory was essential to the proof of the existence of minimal non-degenerate extensions \cite{johnson2024minimal}. In particular it is used to show that a braided fusion 1-category $\cB$ with Müger center $\cZ_{(2)}(\cB)\cong\sVect$ has a MNE. This result generalizes to the case when the Müger center is super-Tannakian. In this case the TQFT is associated to a theory of a dynamical spin structure, along with a $G_f$ gauge symmetry; this is sometimes called a spin$\text{-}G_f$ gauge theory; see Example \ref{def:twistedspinex2} for a definition. \footnote{In this language, the theory given by $\cZ(\tsVect)$ could be called a spin$\text{-}\Z/2$ gauge theory.} This notation means that the common $\Z/2$ of the group $\Spin_n$ and $G_f$ are identified. Using spin$\text{-}G_f$ to describe the gauge theory is particularly evocative because it makes contact with twisted spin structures.
The Lagrangian descriptions for these gauge theories are given by classes in $\SH^{4+\kappa}(BG_b)$ but they are not canonical, as pointed out in in \cite{Johnson-Freyd:2022}. 

%Proposition \ref{prop:mainA} allows us to argue that actually the TQFT associated to $[(\alpha,\beta,\gamma)] \in \SH^{4+\kappa}(BG_b)$ is actually only defined with $\alpha \in \rH^2(BG_b;\Z/2)$ modulo $\kappa$ hence  spelling out in detail a point presented in \cite{Johnson-Freyd:2022}.

The (3+1)d TQFT of a dynamical spin structure has automorphisms given by the group $\Z/16 \cong \mathrm{Mext}(\sVect)$ of MNEs of $\sVect$ as shown in \cite[Equation 19]{Johnson-Freyd:2020twl}. These automorphisms  are implemented by stacking the boundary of the TQFT with $\Spin(n)_1$. Hence, for a super-Tannakian category $\cE = \Rep(G_f)$ the group $\Mext(\cE)$ should be associated to the automorphisms of the spin$\text{-}G_f$ theory. Some computations of $\Mext(\cE)$  have been done in \cite[Section 5]{nikshych2022computing} for when $G_f = \Z/2^{n-1} \textbf{.}_{\kappa} \Z/2$ i.e.\ an extension of groups, and  $G_f=\Z/2 \times \Z/2$. There is a map $F: \Mext(\cE) \rightarrow \Mext(\sVect)$, which is a homomorphism called the central charge map, and the  image of the map $F$ is also computed in \cite[Corollary 5.5, Lemma 5.9]{nikshych2022computing}; we elaborate more on this map, including its construction in \S\ref{section:TQFTandSym}.
Thanks to the relationship between minimal non-degenerate extensions and (3+1)d TQFT arising from fusion 2-categories, we can make contact between the map $F$ and the computation in Theorem \ref{prop:mainA}.
It turns out that a way of interpreting our computation for the periodicity of $\varpi$ is that it reveals how many of the 16 $\Spin(n)_1$ theories in the image of $F$ leave the cocycle invariant. Another way of viewing the cases when $\varpi$ does not shift is to consider which of the $\Spin(n)_1$ theories can be lifted from the boundary of the TQFT $\cZ(\tsVect)$ to the boundary of the spin$\text{-}G_f$ TQFT, i.e. coupling it to a background twisted spin structure, see Figure \ref{fig:composeSandwich}. A form of this question was studied in \cite{Barkeshli:2021ypb} where the authors classified invertible fermionic theories. Using part of the data in their classification  we can also reproduces the known results for the image of the map $F$, for different $\cE$.

Our next proposition puts the results of Theorem \ref{prop:mainA},\cite{nikshych2022computing}, and \cite{Barkeshli:2021ypb}, into a unified setting:

%The conditions on $\kappa$ that are needed for $\varpi$ to be 2 or 4 periodic,  exactly matches with the image of the map $\Mext(\cE)\rightarrow\Mext(\sVect)$. What this map does is it takes a theory that was compatible on a spin$\text{-}G_f$ background and forgets the extra $G_f$ symmetry.
% our results reproduce those in \cite{nikshych2022computing}.

\begin{maintheorem}(Theorem \ref{prop:TFAE})\label{prop:mainB}
 Let $\fC$ be a fermionic strongly fusion 2-category parametrized by the following data. A finite group $G_b$, a class $\kappa \in \rH^2(BG_b;\Z/2)$ such that $G_f = G_b\,  \textbf{.}_{\kappa} \Z/2$, a class $\varpi \in \SH^{4+\kappa}(BG_b)$. Let $\cE = \Rep(G_f)$, then the following are equivalent:
 \begin{enumerate}
     \item  \label{point1} The $\Spin(n)_1$ theories such that application of Construction \ref{construction:F2Cmodify} on $\cT$ stacked with $\Spin(n)_1$ leaves $\varpi$ invariant.
     \item \label{point2} The image of $F:\Mext(\cE) \to \Mext(\sVect)$.
     \item \label{point3} The $\Spin(n)_1$ that can be coupled to background spin$\text{-}G_f$ structure set by the symmetry $\fC$.
 \end{enumerate}
\end{maintheorem}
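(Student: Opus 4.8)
The plan is to prove the three statements equivalent by establishing $(2)\Leftrightarrow(3)$ and $(1)\Leftrightarrow(2)$ separately, realizing in each case the relevant set of theories as one and the same subgroup $H\le\Mext(\sVect)\cong\Z/16$. Throughout I would use the dictionary recalled in \S\ref{section:TQFTandSym}: the fermionic strongly fusion 2-category $\fC$ bounds a (3+1)d spin$\text{-}G_f$ gauge theory whose Lagrangian is $\varpi$; this theory sits above the spin$\text{-}\Z/2$ gauge theory $\cZ(\tsVect)$ (the two sharing the dynamical $\Z/2$), and the boundaries of $\cZ(\tsVect)$ carry a free transitive action of $\Mext(\sVect)\cong\Z/16$ implemented by stacking the $\Spin(n)_1$ theories. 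Since the central charge map $F$ is a group homomorphism, $\mathrm{im}(F)$ is automatically a subgroup of $\Z/16$, hence one of $\Z/16,\,2\Z/16,\,4\Z/16,\,8\Z/16,\,0$; the content of the theorem is that this subgroup agrees with the period extracted in Theorem \ref{prop:mainA}, namely $\Z/16$, $2\Z/16$, or $4\Z/16$ according to whether $\kappa=0$, $\kappa\neq0$ and $\Sq^1\kappa=0$, or $\kappa\neq0$ and $\Sq^1\kappa\neq0$.

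For $(2)\Leftrightarrow(3)$: an element of $\Mext(\cE)$, $\cE=\Rep(G_f)$, is a minimal non-degenerate braided fusion 1-category $\cC\supseteq\cE$; its Lagrangian boundary of the spin$\text{-}G_f$ theory, restricted along the map from that theory down to $\cZ(\tsVect)$, becomes a $\Spin(n)_1$-decorated boundary of $\cZ(\tsVect)$, and this decoration is exactly $F(\cC)$ by definition of the central charge map. Conversely, if a $\Spin(n)_1$ can be coupled to the twisted spin (i.e.\ spin$\text{-}G_f$) structure that $\fC$ imposes on the background of $\cT$ — meaning its anomaly theory is trivialized by the $G_f$-background — then gauging $G_f$ on the resulting decorated boundary yields an MNE of $\cE$ whose image under $F$ is that $\Spin(n)_1$. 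So $(2)$ and $(3)$ cut out the same subset of $\Z/16$; this step is essentially a translation once one knows that $\fC$ really bounds the spin$\text{-}G_f$ theory (the super-Tannakian generalization of \cite{johnson2024minimal} alluded to above) and understands its Lagrangian boundaries.

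For $(1)\Leftrightarrow(2)$: the stack-and-condense move of Construction \ref{construction:F2Cmodify} with $\Spin(n)_1$ implements, on the bounding (3+1)d theory, the automorphism of $\cZ(\tsVect)$ labelled by $n\in\Z/16$ — equivalently the basic move of Theorem \ref{prop:mainA} iterated $n$ times, since $[\Spin(n)_1]=n\cdot[\Spin(1)_1]$ in $\Z/16$. If $\Spin(n)_1\in\mathrm{im}(F)$ then this automorphism lifts to the spin$\text{-}G_f$ theory and hence fixes $\varpi$ up to exact terms, so $\Spin(n)_1$ lies in the set of $(1)$, giving $(2)\Rightarrow(1)$. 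For the converse I would compute $\mathrm{im}(F)$ independently — using \cite{nikshych2022computing} for $G_f=\Z/2^{n-1}\textbf{.}_{\kappa}\Z/2$ and $G_f=\Z/2\times\Z/2$, and in general extracting it from the classification of invertible fermionic theories in \cite{Barkeshli:2021ypb}, whose data records precisely how the gravitational $\Z/16$-invariant is constrained by $\kappa$, with $\Sq^1\kappa$ entering as the obstruction to the relevant $\Pin^+$-type refinement — and then check that the subgroup so obtained is exactly the period subgroup of Theorem \ref{prop:mainA}. Since $(1)$ and $(2)$ are then both identified with this subgroup, they coincide.

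I expect the main obstacle to be precisely the matching in $(1)\Leftrightarrow(2)$: one must show that the (super)cohomological shift of $\varpi$ on $BG_b$ computed in Theorem \ref{prop:mainA} detects the same arithmetic condition on $n\bmod16$ that defines $\mathrm{im}(F)$. The delicate point is the $\bC/\Z$-valued term $G_\kappa(\alpha)$ in the shift: one has to verify it is a coboundary exactly when the $\Z/2$-valued parts $\kappa$ and $\kappa\cup_1\alpha$ already vanish, so that invariance of $\varpi$ is governed by $\kappa$ alone and $H$ is a genuine subgroup of $\Z/16$ rather than a torsor over one. This is where the compatibility with the differential built into Construction \ref{construction:F2Cmodify}(2) — concretely $G_\kappa(\alpha)+G_\kappa(\alpha+\kappa)=0$ — does the real work.
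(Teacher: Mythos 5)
Your overall architecture overlaps substantially with the paper's, just reshuffled: your $(2)\Rightarrow(1)$ step (the stacking automorphism of $\cZ(\tsVect)$ lifts to the spin$\text{-}G_f$ theory and hence fixes $\varpi$) is exactly the content of Corollary \ref{cor:autos}, and the case-by-case matching of the period of $\varpi$ from Theorem \ref{prop:mainA} against the constraints of \cite{Barkeshli:2021ypb} is precisely how the paper proves $(1)\Leftrightarrow(3)$ (the equation $dn_2=n_1\kappa+c\,\Sq^1\kappa$, Lemma \ref{lem:oddc} for half-integral $c$, and the $G_b=S_4$ example when $\Sq^1\kappa\neq 0$). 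The paper, however, never argues $(2)\Leftrightarrow(3)$ directly; it closes the triangle through $(1)$.

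The genuine gap is your $(3)\Rightarrow(2)$ step. In the paper, ``can be coupled to the background spin$\text{-}G_f$ structure'' means only that the first two consistency equations of Ansatz \ref{ansatz:invertiblespin} are solvable; the third equation $d\nu_3=\mathcal{O}_4$ is explicitly allowed to fail, in which case the $G_b$-enriched $\Spin(n)_1$ is anomalous and must live on the boundary of a (3+1)d bulk. Gauging $G_f$ on such an anomalous boundary does not produce a non-degenerate braided fusion 1-category containing $\Rep(G_f)$, so your conclusion that ``gauging yields an MNE of $\cE$ whose image under $F$ is that $\Spin(n)_1$'' does not follow without an extra argument that $(n_1,n_2)$ can be chosen with $\mathcal{O}_4$ trivializable. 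This also undercuts your converse of $(1)\Leftrightarrow(2)$: extracting $\mathrm{im}(F)$ ``in general from the Barkeshli data'' presupposes the identification of $\mathrm{im}(F)$ with the coupling condition, i.e.\ presupposes $(2)\Leftrightarrow(3)$, so the argument is circular, and the computations of \cite{nikshych2022computing} cover only special $G_f$ and cannot substitute for the general case. The paper avoids both issues by running $(1)\Leftrightarrow(2)$ entirely in the bulk (automorphisms of $\cZ(\Sigma\cB)$ that fix $\varpi$ map to $\Mext(\sVect)$) and reserving the Barkeshli analysis for $(1)\Leftrightarrow(3)$. A smaller point: the role of $\Sq^1\kappa$ is through the term $c\,\Sq^1 w_2$ in the unitary consistency equation, not as an obstruction to a $\Pin^+$-type refinement; no orientation-reversing symmetry enters anywhere in this setting, as the paper restricts to twists of $w_2$ only.
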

\begin{rem}
   By Theorem \ref{prop:mainA}, the shifts of $\varpi$ in (\ref{point1}) depend only on conditions for $\kappa$, and applies for all groups $G_f$ that can be constructed from $\kappa$ and $G_b$. Therefore in order to make contact between (\ref{point1}) and  (\ref{point3}) we must consider all possible spin$\text{-}G_f$ backgrounds for $\Spin(n)_1$. Suppose there exists some $G_b$ for which the associated $G_f$ is nontrivial, but $\Spin(n)_1$  for some $n$ does not satisfy the consistency conditions in \cite[Table 1]{Barkeshli:2021ypb}. Then this theory cannot be consistent on the symmetry background, and therfore cannot be used for the stacking procedure in  (\ref{point1}) onto $
   \fC$.
\end{rem}

%%%%%%%%%%%%%%%%%%%%%%%%%%%%%%%%%%%%%%%%%%%%%%%%%%
\subsubsection*{Outline}
%%%%%%%%%%%%%%%%%%%%%%%%%%%%%%%%%%%%%%%%%%%%%%%%%%
The structure of this paper is as follows. In \S\ref{section:fermionicsym} we start by giving some background for fermionic fusion 2-categories and the necessary cohomological information. Then in \S\ref{subsection:shifting}  we prove Theorem \ref{prop:mainA}, hence showing how the TQFT interacts with the fusion 2-category symmetry. In \S\ref{section:TQFTandSym} we explain the mutual interactions between categorical symmetries and TQFT. We also relate the results of Theorem \ref{prop:mainA} with minimal modular extensions, culminating in a proof of Theorem \ref{prop:mainB}. 

%%%%%%%%%%%%%%%%%%%%%%%%%%%%%%%%%%%%%%%%%%%%%%%%%%
\subsection*{Acknowledgments}
%%%%%%%%%%%%%%%%%%%%%%%%%%%%%%%%%%%%%%%%%%%%%%%%%%

It is a pleasure to thank Theo Johnson-Freyd, Ryohei Kobayashi, Ho Tat Lam, Dmitri Nikshych, Abhinav Prem, Xiao-Gang Wen, and Yunqin Zheng for helpful conversations. 
This research was supported in part by grant NSF PHY-2309135 to the Kavli Institute for Theoretical Physics (KITP). Part of this work
was completed at the  KITP and benefited from interactions with the participants of the  program “Generalized
Symmetries in Quantum Field Theory: High Energy Physics, Condensed Matter, and Quantum
Gravity”. The author M.Y. is supported by the EPSRC Open Fellowship EP/X01276X/1.
The author D.T. is supported by the Simons Collaboration for Global Categorical Symmetries and the Nova Scotia Graduate Scholarship program. No authors have competing interests to declare that are relevant to the content of this article. There is no data available for this article to declare.

%%%%%%%%%%%%%%%%%%%%%%%%%%%%%%%%%%%%%%%%%%%%%%%%%%
\section{Describing the Fermionic Categorical Symmetry}\label{section:fermionicsym}
%%%%%%%%%%%%%%%%%%%%%%%%%%%%%%%%%%%%%%%%%%%%%%%%%%

\subsection{Preliminaries}\label{subsection:prelim}
We begin by setting up the relevant definitions and classification data related to fermionic fusion 2-categories. The main goal of this subsection is to explain how the twisting of supercohomology in the sense that we are going to define, can be viewed in the same way as a twisted spin structure.  This perspective will prove useful in the analysis carried out in \S\ref{section:TQFTandSym}.

For fermionic categories with grouplike symmetries, we define a fermionic group as follows.
\begin{defn}
\label{def:fermionic_symmetry}
A \textit{fermionic symmetry} is the data of a group $G_b$, and a central extension
\begin{equation}
    1\rightarrow \Z/2\rightarrow G_f\rightarrow G_b\rightarrow 1\,,
\end{equation}
given by a class $\kappa \in H^2(BG_b;\Z/2)$.
We refer to $G_b$ as the \textit{bosonic symmetry group} of the fermionic symmetry.\footnote{The most general version of a fermionic group discussed in \cite[\S 7]{Ben88}, and \cite{Debray:2023iwf},  includes an anti-unitary component, that involves a class in $\rH^1(BG_b;\Z/2)$.}
\end{defn}
As is conventional, we will refer to a fermionic symmetry in terms of its full fermionic symmetry group $G_f$.
The categorical generalization of the fermionic group that acts on QFTs in (2+1)d is a fermionic fusion 2-category. By a fusion 2-category we mean a semisimple monoidal 2-category with a simple monoidal unit introduced in \cite{douglas2018fusion},
and we say that a fusion 2-category $\fC$ is fermionic if $\cZ_{(2)}(\Omega \fC)$ is super-Tannakian. We will use the notation that $\Omega\fC  \cong \End_{\fC}(\mathds{1})$ and the adjoint of $\Omega$ is given by a ``suspension map'' $\Sigma$, for which  $\Sigma \fC$ is the same as taking the category of modules of $\fC$. 

\begin{defn}{\cite{Johnson-Freyd:2020ivj}}
    A fermionic strongly fusion 2-category  is a fusion 2-category $\fC$ such that $\Omega \fC \cong \sVect$. 
\end{defn}

\begin{theorem}{\cite[Theorem B]{Johnson-Freyd:2020ivj}}
    If $\mathfrak C$ is a fermionic strongly fusion 2-category, the equivalence classes of indecomposable objects of $\mathfrak C$ form a finite group, which is a central double cover of the group $\pi_0 \mathfrak C = G_b$ of components of $\mathfrak C$. 
\end{theorem}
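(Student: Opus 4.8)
The plan is to reduce everything to the single claim that \emph{every indecomposable object of $\fC$ is invertible}; once that is in hand the rest is formal. Finiteness is automatic, since a fusion $2$-category has by definition only finitely many indecomposable objects. Rigidity of fusion $2$-categories already makes $\pi_0\fC$ a group: for any object $X$ with dual $X^\vee$ one has $\mathrm{Hom}_\fC(\mathds 1, X\otimes X^\vee)\cong\End_\fC(X)\ni\id_X\neq 0$, so $[X\otimes X^\vee]=[\mathds 1]$, and since $\otimes$ descends to $\pi_0\fC$ (tensoring a nonzero $1$-morphism with a dualizable object stays nonzero) this gives $[X][X^\vee]=[\mathds 1]$; call this finite group $G_b$. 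Granting the invertibility claim, $\otimes$ restricts to a genuine binary operation on the set $G_f$ of equivalence classes of indecomposables (a tensor product of invertibles is invertible, hence indecomposable), with unit $[\mathds 1]$ and inverses $[X^\vee]$; the component map $G_f\to G_b$ is then a surjective group homomorphism, and its kernel is exactly the set of indecomposables lying in the connected component of $\mathds 1$. It remains to (i) prove the invertibility claim and (ii) identify the kernel with a central $\Z/2$.

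For the component of the unit I would use the structural fact that the canonical functor $\Sigma(\Omega\fC)\to\fC$ is fully faithful with image the unit component $\fC_e$, so that $\fC_e\simeq\Sigma(\sVect)=\tsVect=\Mod(\sVect)$. Its indecomposable objects are the indecomposable module categories over $\sVect$ (viewed as a plain fusion category, where there are exactly two, indexed by the two subgroups of $\Z/2$): the regular module $\sVect$ itself, which is the monoidal unit, and a second one which can be presented as $\Mod_{\sVect}(\mathbb{C}\ell_1)\simeq\Vec$, where $\mathbb{C}\ell_1\cong\mathds 1\oplus\psi$ is the Clifford algebra with odd generator; call the latter $C$. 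The decisive point is that $C$ is invertible: computing the relative tensor product over $\sVect$ via the braided tensor product of algebras, $\mathbb{C}\ell_1\mathbin{\widehat\otimes}\mathbb{C}\ell_1\cong\mathbb{C}\ell_2\cong\End_{\sVect}(\mathbb{C}^{1|1})$, which is an internal endomorphism algebra and hence Morita-trivial in $\sVect$, so $C\otimes C\cong\mathds 1$. Thus $\fC_e$ has exactly the two indecomposable objects $\{\mathds 1,\,C\}$ and they form $\Z/2$. This is precisely where the hypothesis $\Omega\fC\cong\sVect$ (as opposed to a general super-Tannakian $\Omega\fC$) is essential, via the fact that $\sVect$ has \emph{no non-invertible indecomposable module categories} — equivalently, the super-Brauer group of $\mathbb{C}$ is $\Z/2$.

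For a general connected component I would invoke the structure of $\fC$ as a faithfully $G_b$-graded extension of $\fC_e\simeq\tsVect$: each graded piece $\fC_g$ is an invertible $\fC_e$-bimodule $2$-category, and every invertible bimodule $2$-category over $\tsVect$ is again equivalent, as a $2$-category, to $\tsVect$ — so $\fC_g$ has exactly two indecomposable objects. Fixing one such object $X_g$, the object $X_g\otimes X_g^\vee$ lies in $\fC_e$, is a connected separable algebra there, and the classification of connected separable algebras in $\tsVect$ (which reduces to the two algebras $\mathds 1$ and $\mathbb{C}\ell_1$ in $\sVect$ handled above) forces it to be $\mathds 1$; hence $X_g$ is invertible, and therefore so is the other indecomposable $C\otimes X_g$ of $\fC_g$. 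This proves (i), and counting components gives $\abs{G_f}=2\abs{G_b}$. For (ii), the kernel of $G_f\to G_b$ is the set $\{\mathds 1,\,C\}$ of indecomposables of $\fC_e$, i.e. $\Z/2$, and it is central in $G_f$ because the embedding $\tsVect\simeq\fC_e\hookrightarrow\fC$ exhibits $\Omega\fC=\sVect$ — a \emph{symmetric} fusion category — as the endomorphisms of the unit, so $C$ carries a coherent half-braiding and hence $C\otimes Y\cong Y\otimes C$ for every object $Y$. This yields the central extension $1\to\Z/2\to G_f\to G_b\to 1$.

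The step I expect to be the real obstacle is the invertibility of indecomposable objects in the non-unit components — concretely, establishing that the non-unit graded pieces of $\fC$ are invertible $\tsVect$-bimodule $2$-categories and that every such bimodule has exactly two indecomposable objects, both invertible in $\fC$. The other ingredients are either formal (rigidity, the group law, centrality from symmetry of $\sVect$) or amount to the explicit representation theory of $\sVect$; but this step is where rigidity of $\fC$ must be combined with the special feature of $\sVect$ that every connected separable algebra in it is Morita-trivial. For a general fermionic fusion $2$-category, with $\cZ_{(2)}(\Omega\fC)$ merely super-Tannakian, this step genuinely fails and the indecomposables need not form a group, so the argument must use $\Omega\fC\cong\sVect$ in an essential way at exactly this juncture.
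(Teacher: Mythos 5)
First, note that the paper does not prove this statement at all: it is imported verbatim from \cite[Theorem B]{Johnson-Freyd:2020ivj}, so there is no in-paper argument to compare against, and your proposal has to be judged against the literature route. Your skeleton (unit component $\fC_e\simeq\Sigma\Omega\fC\simeq\tsVect$ with exactly two indecomposables forming $\Z/2$ via the super Brauer group, then two invertible indecomposables per component, then centrality) is the right shape, and the finiteness, the group structure on $\pi_0\fC$, and the analysis of $\Sigma\sVect$ are fine. But the crucial step — invertibility — rests on a false claim. Connected separable algebras in the \emph{2-category} $\tsVect$ are not exhausted by $\mathds{1}$ and $\mathbb{C}\ell_1$: a separable algebra in $\Sigma\sVect$ is an $\sVect$-central multifusion category (in $\mathbf{2Vect}$, for comparison, separable algebras are exactly multifusion categories), so there are infinitely many. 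What is classified by $\{\mathds 1,\mathbb{C}\ell_1\}$ is the set of Morita classes of superalgebras in the 1-category $\sVect$, i.e.\ the indecomposable \emph{objects} of $\tsVect$ — you have conflated the two. Consequently, knowing that $X_g\otimes X_g^\vee$ lies in $\fC_e\simeq\tsVect$ only tells you $X_g\otimes X_g^\vee\cong\mathds 1^{\oplus a}\oplus C^{\oplus b}$ for some multiplicities $a,b$; the entire content of the theorem is the multiplicity statement $a=1$, $b=0$ (and, relatedly, that a tensor product of indecomposables stays indecomposable), and nothing in your argument rules out $a\geq 2$ or $b\geq 1$. That this needs a genuine argument is shown by connected fusion 2-categories such as $\Mod(\Vec_{\Z/2})$, where a simple $X$ has decomposable $X\otimes X^\vee$; the hypothesis $\Omega\fC\simeq\sVect$ must enter again at exactly this point, not only through the identification of $\fC_e$.

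Two further issues. The assertions that each component $\fC_g$ is an invertible $\fC_e$-bimodule 2-category and that every invertible $\tsVect$-bimodule 2-category is 2-categorically equivalent to $\tsVect$ (i.e.\ $\pi_0\,\sBrPic(\tsVect)=1$) are true, but they are theorems of at least comparable depth, established in the graded-extension theory for fusion 2-categories that the present paper cites and which postdates (and partly builds on) the statement you are proving; invoking them without proof, as you acknowledge, leaves the main obstacle untouched and risks circularity, whereas the original argument is more direct and does not route through the Brauer--Picard computation. Finally, your centrality argument is too quick: the symmetry of $\sVect$ gives braidings of $\psi$ against objects of $\Omega\fC$, but centrality of $C$ requires comparing $C\otimes Y$ with $Y\otimes C$ for arbitrary objects $Y$, i.e.\ commuting $\psi$ past arbitrary 1-morphisms, which is not automatic from symmetry of $\Omega\fC$. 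Once invertibility of all indecomposables is established, centrality follows more cleanly: conjugation by any invertible object is an autoequivalence of $\fC$ preserving the unit component, hence fixes its unique non-unit indecomposable $C$ up to equivalence, giving the central $\Z/2$.
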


\begin{example}
    The simplest example of a nontrivial fermionic strongly fusion 2-category is given by the trivial extension $\tsVect \boxtimes \mathbf{2Vect}^\pi_{G_b}$ where $\pi \in \H^4(BG_b;\mathbb C^\times)$, for which the group $G_f = \Z/2\times  G_b$, i.e. the $\H^2(BG_b;\Z/2)$ class giving rise to the extension is trivial. Since the group $\H^2(BG_b;\Z/2)$  is trivial when $G_b$ is odd, in this case all the fermionic strongly fusion 2-categories take the above form.
\end{example}

Since a fermionic strongly fusion 2-category feels very much like a group $G_f$, it is useful to think of this type of categorical symmetry as arising from a $G_f$ bundle which gives rise to a twisted spin structure for the underlying manifold.
\begin{defn}
\label{def:twisted}
     Let $V\to X$ be a vector bundle. An \emph{$(X, V)$-twisted spin structure} on a vector bundle $E\to M$ is data of a map $f: M\to X$ and a spin structure on $E\oplus f^*(V)$. %\arun{Allow virtual bundles, so \pinp is $1-\sigma$ rather than $3\sigma-3$?}
\end{defn}
The last condition  means that a twisted spin structure requires a trivialization of $w_1(E\oplus f^*(V))$ and $w_2(E\oplus f^*(V))$.
%\footnote{While we will focus on only the case where the twist is given by vector bundles, in general it was found in \cite{Debray:2022wcd} and proven in \cite{Debray:2023tdd} that it is possible to consider twists of spin structure by cohomology classes that do not arise as characteristic classes of some vector bundle.}
In order to be compatible with the fermionic group in Definition \ref{def:fermionic_symmetry} of a fermionic fusion 2-categories, which were defined using $\kappa$, we will not consider the full notion of a twisted spin structure that includes twists by classes in $\H^1(X;\Z/2)$ that modify $w_1(E)$ and hence the orientation of the underlying manifold.
For $b\in \H^2(X;\Z/2)$ we will take $V$ such that $w_2(V) = b$ and thus $w_2(E) = w_2(V)$ by the Whitney sum formula.

\begin{example}\label{def:twistedspin}
    In terms of a twisted spin structure, spin$^c$ is given by a $(B\mathbb{C}^\times,L)$-twisted spin structure, where $L$ is the complex line bundle over $B\mathbb{C}^\times$.  By Definition \ref{def:twisted}, for $E= TM$ this gives an identification $w_2(TM) = c_1 \mod 2$.
\end{example}

\begin{example}\label{def:twistedspinex2}
    Consider the space $B\Z/2^{n}$ where $n> 2$. The cohomology is $\rH^*(B\Z/2^{n};\Z/2) = \Z/2[x,y]/\{x^2=0\}$ with $|x|=1$ and $|y|=2$. Let $W \to B\Z/2^{n}$ be the rotation representation with $w_2(W) = y$. A spin$\text{-}\Z/2^n$ structure is given by $$ \Spin \times_{\{\pm 1\}} \Z/2^n: = \frac{\Spin \times \Z/2^n}{\{\pm 1\}}$$
    where the common $\Z/2$ quotient is fermion parity, and $\Spin:= \lim_{n\to \infty} \Spin_n$ is the infinite spin group. This structure fits into the following pullback:
    \begin{equation}\label{eq:GMpullback}
    \begin{tikzcd}[column sep=2cm, row sep=2cm]
\Spin \times_{\{\pm 1\}} \Z/2^n \arrow[r] \arrow[d] \arrow[dr, phantom, "\lrcorner", very near start] & \Z/2^{n-1} \arrow[d, "y"] \\
\SO \arrow[r, "w_2(TM)",swap] & B\Z/2\,.
\end{tikzcd}
\end{equation}
which implies an identification of $w_2(TM)$ and $y$. Hence, a spin$\text{-}\Z/2^n$ structure is equivalent to a $(B\Z/2^{n-1},W)$-twisted spin structure. A spin$\text{-}G_f$ structure is defined in an analogous way as a twisted spin structure, whose twisting is given by a rank 2 bundle over the space $BG_b$.
We will call upon this example in \S\ref{section:TQFTandSym}.
\end{example}

Twisted tangential structures play well with generalized cohomology, as instead of computing twisted generalized cohomology of a space, one can equivalently compute the (untwisted) generalized cohomology of a Thom spectrum. Intuitively, one should think that the Thom spectrum as an all-in-one object that captures the space in question along with the twist. For $V\to X$  a real vector bundle with Euclidean metric, the \emph{Thom space} $\mathrm{Th}(X,V)$ is the quotient $D(V)/S(V)$, where $D(V)$ is the disk bundle of vectors in $V$ with norm less than or equal to 1, and $S(V)$ is the sphere bundle of vectors of norm equal to 1.  The \emph{Thom spectrum} $X^V$ of a vector bundle $V\rightarrow X$ is the suspension spectrum of the Thom space $\Sigma^\infty \mathrm{Th}(X,V)$. 

\begin{example}\label{ex:bordism}
     Using the groups in  \ref{def:twistedspinex2}, if one wants to compute bordism for manifolds with the twisted spin structure spin$\text{-}\Z/2^n$, then it is equivalent to computing the homotopy groups of 
\begin{equation}
    MTSpin \wedge (B\Z/2^{n-1})^{W-2}\,,
\end{equation}
which give $\Omega^\Spin_*((B\Z/2^{n-1})^{W-2})$.
\end{example}

 While our discussion has centered on twisted spin structures arising from vector bundle, this is not strictly necessary. In general, for $X$ a space, with $a\in H^1(X;\Z/2)$, and $b \in H^2(X;\Z/2)$. The {$(X, a, b)$-twisted spin structure} on a vector bundle $E\to M$ is data of a map $f\colon M\to X$ and trivializations of $w_1(E) + f^*(s)$ and $w_2(E) + w_1(E)^2 + f^*(\omega)$, but we do not assume there is a vector bundle $V \rightarrow X$ such that $s = w_1(V)$ and $b = w_2(V)$. This type of twisted spin structure which just considers cohomology classes was coined the ``fake vector bundle'' twist in  \cite{Debray:2023tdd} see also \cite{Debray:2022wcd} for an application. It is still possible even in the fake vector bundle case to define Thom spectra,  following \cite{ABGHR14a,ABGHR14b}. 
%This was done via a shearing construction in \cite{DDHM23,Debray:2023ior,Debray:2025iqs} in the context of bordism, where the generalized cohomology theory is represented by a particular Madsen-Tillmann spectrum of manifolds with some $\eta$-tangential structure. 
%As it will be useful in \S\ref{section:TQFTandSym} we briefly review how this procedure works here.
%Consider $V\to X$ a real vector bundle, with a choice of Euclidean metric. We denote by $D(V)$ be the disk bundle of vectors in $V$ with norm less than or equal to 1, and by $S(V)$ the sphere bundle of vectors of norm equal to 1.

%\begin{defn}
%     For $V\to X$  a real vector bundle with Euclidean metric, the \emph{Thom space} $\mathrm{Th}(X,V)$ is the quotient $D(V)/S(V)$.
%\end{defn}

%\begin{defn}
 %   The \emph{Thom spectrum} $X^V$ of a vector bundle $V\rightarrow X$ is the suspension spectrum of the Thom space $\Sigma^\infty \mathrm{Th}(X,V)$.
%\end{defn}

%\begin{lemma}[Shearing]\label{lem:shearing}
%Let $\xi: B\rightarrow B\O$ be a tangential structure.
   % Let $S \rightarrow B\O$ be the tautological rank-zero virtual vector bundle and $\eta: B \times X \to B\O$ a tangential structure classified by the virtual vector bundle $\xi^*(S)\boxplus (V-r_V)$ where $r_V$ denotes the rank of $V$. Then an $\eta$-structure is equivalent to a $(X,V)$-twisted $\xi$-structure.
%\end{lemma}
This more generalized notion of twisted spin structures by cohomology classes will now assist in understanding twisted 
supercohomology, which is a generalized cohomology theory that has seen many applications in physics to fermionic topological phases \footnote{It was also shown in \cite{Debray:2023tdd} that restricted supercohomology can be used to twist string structures, in the context of heterotic string theory.}. Computations of twisted supercohomology are similar in spirit to Example  \ref{ex:bordism}, and can be done by using a Thom spectrum.

We now present one way of introducing supercohomology. We start with a ``reduced'' version of supercohomology $\mathrm{rSH}$ introduced in \cite{Freed:2006mx,Gu:2012ib} which has
 \begin{equation}
     \pi_0(\mathrm{rSH})=\bC^\times, \quad  \pi_{-1}(\mathrm{rSH})=\Z/2\,,
 \end{equation}
  and the $k$-invariant is given by
  \begin{equation}
      (-1)^{\Sq^2}: \rH^{*}(-;\Z/2) \to \rH^{*+2}(-;\bC^\times).
  \end{equation}
   Here, $\Sq^{2^n}:\rH^*(-;\Z/2) \to \rH^{*+2^n}(-;\Z/2) $ is a  Steenrod square, a stable cohomology operation, and $(-1)^x:\rH^*(-;\Z/2) \to \rH^*(-;\Z/2)$ stands for the operation induced by the inclusion $\Z/2 \hookrightarrow \bC^\times$, which is nontrivial on those elements in $\rH^*(-;\Z/2)$ that have  nontrivial image under $\Sq^1$ in $\rH^{*+1}(-;\Z/2)$.

There is also ``extended'' supercohomology $\SH$ used in \cite{Kapustin:2017jrc,Gaiotto:2017zba,Johnson-Freyd:2021tbq,Decoppet:2025eic} which has the homotopy groups of $\mathrm{rSH}$ but with and extra homotopy group $\pi_{-2}(\SH)=\Z/2$, reflecting the fact that the Majorana chain is an invertible (1+1)d phase. The $k$-invariant between $\pi_{-2}$ and $\pi_{-1}$ is given by $\Sq^2$.
When we refer to supercohomology we will always be refer to the extended version of supercohomology. 
\begin{rem}
    The way of realizing supercohomology that was just explained can be seen as the dual of superhomology which we denote $\tau_{\leq 2} \ko$, which can be viewed as a truncation of $\ko$ by the following short exact sequences:
\begin{equation}
        \tau_{\geq 3} \ko \longrightarrow  \ko \longrightarrow \tau_{\leq 2} \ko\,.
\end{equation}
where $\tau_{\geq k}$ denotes the homotopy groups in degree $k$ and above. Using this short exact sequence it is possible to develop a modified Adams spectral sequence to conduct computations for twisted $\tau_{\leq 2} \ko$-homology.
Examples of computations with applications to anomalies, appear in \cite{DYY1}.
\end{rem}

We now present supercohomology in a slightly different context where it is used to parametrize fermionic fusion 2-categories.  We start by noting that the homotopy groups of $\tsVect^\times$ are given by 
\begin{equation}\label{eq:homotopyBrPicsvect}
 \begin{tabular}{|c|c|c|c|}
\hline
$\pi_0$ & $\pi_1$ & $\pi_2$  \\
\hline \\[-1.3em]
 $\Z/2 $ & $\Z/2$ & $\bC^\times$ \\
\hline
\end{tabular}
\end{equation}
which looks like the homotopy groups of $\SH$ shifted by 2.
In fact, for a space $X$, the homotopy classes of maps  $X \to B^{n-2}\tsVect^\times$ are parametrized by 
$\SH^n(X)$. 

Importantly, the groupoid $\tsVect^\times$ furthermore has a non-trivial
automorphism given by fermion parity, i.e. $B\Z/2 \simeq \mathcal{A}ut^{br}(\sVect)$, which acts by $-1$ on the odd vector spaces. The action of $B\Z/2$ on $\tsVect^\times$ is given by the following fiber sequence
\begin{equation}\label{eq:fibration}
    \begin{tikzcd}
        B^{n-2}\tsVect^\times \arrow[r] & B^{n-2}\tsVect^\times/\!/ (B\Z/2) \arrow[d]\\
        & B^2 \Z/2\,.
    \end{tikzcd}
\end{equation}

\begin{defn}
    Let $X$ be a space equipped with a map $\kappa:X\to  B^2\Z/2$. The $\kappa$-twisted $n$-th cohomology of $X$ is the group $\SH^{n+\kappa}(X)$ of homotopy classes of
    $B\Z/2$-equivariant maps from $X$ to $B^{n-2}\tsVect^\times$.
\end{defn}

\begin{defn}\label{def:supercohclass}
   A $n$-cocycle in twisted supercohomology is the data of a triple $(\alpha, \beta, \gamma)$ where:
    \begin{itemize}
        \item $\alpha \in \rH^{n-2}(BG_b;\Z/2)$ is the \textit{Majorana layer}, 
        \item $\beta \in \mathrm{C}^{n-1}(BG_b;\Z/2)$ is the $\textit{Gu-Wen}$ layer, such that $d\beta =(\Sq^2+\kappa) \alpha$,
        \item and $\gamma \in \mathrm{C}^n(BG_b;\mathbb{C}/\Z)$ is the \textit{Dijkgraaf-Witten} layer, such that $d\gamma = i((\Sq^2+\kappa)\beta)+ f_{\kappa}(\alpha)$
        where $i:\Z/2 \to \mathbb{C}/\Z$ and $f_\kappa(\alpha)$ is a cochain that corrects the failure of $(-1)^{(\Sq^2+\kappa)\beta}$ to be closed. 
    \end{itemize}  
\end{defn}
\noindent Since we adopt the additive notion in the third bullet, we will denote the coefficients by $\mathbb{C}/\Z$, rather than $\mathbb{C}^\times$ for the cochain. For an explicit description of $f_\kappa(\alpha)$, see \cite[Appendix C]{Barkeshli:2022edm}, where it is denoted by $O_5$ therein.
Looking ahead, the Majorana layer will play a significant role in this work, and is what leads to the subtleties of supercohomology as well as the shifts in $\varpi$.

Since the fully fledge theory of fusion 2-categories does not concern us, we will now explain  how twisted supercohomology arises in just the case of fermionic strongly fusion 2-categories. By starting off with $\tsVect$, one way to construct fermionic strongly fusion 2-categories $\fC$ with $\pi_0 \fC = G_b$ is with extension theory \cite{etingof2010fusion}. As was shown in \cite[Theorem 3.11]{decoppet2024extension} faithfully graded $G_b$ extensions of $\tsVect$ are parametrized by maps 
\begin{equation}
    BG_b \rightarrow B \sBrPic(\tsVect),
\end{equation}
where $\sBrPic(\tsVect)$ is the Brauer-Picard groupoid of $\tsVect$ and consists of the
invertible objects and the invertible morphisms in the monoidal 3-category of finite semisimple $\tsVect$-bimodule 2-categories. Physically, this describes the invertible surface and line operators in the Drinfeld center of $\tsVect$. We first present the homotopy groups of $\sBrPic(\tsVect)$, which are given by \cite[Section 4.1]{decoppet2024extension}:
\begin{equation}\label{eq:homotopyBrPic}
 \begin{tabular}{|c|c|c|c|c|}
\hline
$\pi_0$ & $\pi_1$ & $\pi_2$ & $\pi_3$ \\
\hline \\[-1.3em]
1 & $\Z/2 \oplus \Z/2$ & $\Z/2$ & $\bC^\times$ \\ 
\hline
\end{tabular}\,.
\end{equation}
By this description, it is straightforward to see that the space $B\sBrPic(\tsVect)$ has nontrivial homotopy groups in degrees 2, 3, and 4. The  $k$-invariants are given in \cite{Gaiotto:2017zba,Johnson-Freyd:2020twl}, and shown to be the same as the ones for supercohomology. They can be understood by passing to the center of $\tsVect$ and using the equivalence
 $B\sBrPic(\tsVect) \simeq B^2 \cZ(\tsVect)^\times$, see \cite[Lemma 4.1]{decoppet2024extension}. By using \cite[Lemma 4.4]{decoppet2024extension}, which provides an isomorphism of the fibration $B^2\tsVect^\times/\!/(B\Z/2) \rightarrow B^2\Z/2$ in Equation \eqref{eq:fibration} with $B^2\cZ(\tsVect)\rightarrow B^2\Z/2$, we find that
the composite map $\kappa$
 \begin{equation}
 \begin{tikzcd}
  BG_b \arrow[r] \arrow[dotted,rd,"\kappa",swap] & B^{2}\cZ(\tsVect)^\times \arrow[d]\\
        & B^2 \Z/2\,,   
 \end{tikzcd}
 \end{equation}
endows $BG_b$ with an action by $B^2\Z/2$. Then a map  $BG_b \rightarrow B^2\cZ(\tsVect)^\times$ that lifts $\kappa$ restricts to $B^2\tsVect^\times$ and is equivalent to a class in $\SH^{4+\kappa}(BG_b)$.

We can now state the classification of fermionic stronly fusion 2-categories.
\begin{proposition}\label{prop:stronglyF2C}\cite{decoppet2024extension}
Fermionic strongly fusion 2-categories are classified by a bosonic group $G_b$, a class $\kappa \in \H^2(BG_b;\Z/2)$, and a class $\varpi \in \SH^{4+\kappa}(BG_b)$.
\end{proposition}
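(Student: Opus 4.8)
The plan is to reduce the classification statement to the extension-theory result of \cite{decoppet2024extension} together with the identification of the relevant obstruction-theoretic data with twisted supercohomology, both of which have been set up in the preceding discussion. First I would recall that, by \cite[Theorem 3.11]{decoppet2024extension}, faithfully $G_b$-graded extensions of $\tsVect$ are classified by homotopy classes of maps $BG_b \to B\sBrPic(\tsVect)$, and that a fermionic strongly fusion 2-category $\fC$ with $\pi_0\fC = G_b$ is precisely such an extension: the hypothesis $\Omega\fC \cong \sVect$ forces the identity component to be $\tsVect$, and the grading group is $G_b$ by definition. So the content is to unpack the mapping space $\mathrm{Map}(BG_b, B\sBrPic(\tsVect))$ in terms of the stated data $(G_b, \kappa, \varpi)$.

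Next I would use the Postnikov tower of $B\sBrPic(\tsVect)$, whose homotopy groups are given in \eqref{eq:homotopyBrPic}: $\pi_2 = \Z/2\oplus\Z/2$, $\pi_3 = \Z/2$, $\pi_4 = \bC^\times$ (shifting \eqref{eq:homotopyBrPic} up by one). The $k$-invariants coincide with those of supercohomology, as recorded after \eqref{eq:homotopyBrPic}, via the equivalence $B\sBrPic(\tsVect)\simeq B^2\cZ(\tsVect)^\times$ and \cite[Lemma 4.4]{decoppet2024extension}. The key move is to split off the "$\Z/2$" summand of $\pi_2$ that records the map $\kappa\colon BG_b \to B^2\Z/2$: a map $BG_b \to B\sBrPic(\tsVect)$ is the data of its composite $\kappa$ to $B^2\Z/2$ (equivalently a class $\kappa\in\H^2(BG_b;\Z/2)$), together with a lift along the fibration $B^2\tsVect^\times /\!/ (B\Z/2) \to B^2\Z/2$. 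By the $B\Z/2$-equivariance reformulation given in the Definition preceding the Proposition, the set of such lifts — when nonempty — is exactly $\SH^{4+\kappa}(BG_b)$, the $\kappa$-twisted supercohomology group. I would remark that a lift always exists here because the relevant obstruction lands in a group that is killed once $\kappa$ is fixed (the twist is built precisely so that the Majorana/Gu--Wen/Dijkgraaf--Witten cocycle equations of Definition \ref{def:supercohclass} are solvable), so the torsor $\SH^{4+\kappa}(BG_b)$ is non-empty and the parametrization is faithful.

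Finally I would assemble these observations: the pair $(\kappa, \varpi)$ with $\kappa \in \H^2(BG_b;\Z/2)$ and $\varpi \in \SH^{4+\kappa}(BG_b)$ determines, and is determined by, the homotopy class of the classifying map $BG_b \to B\sBrPic(\tsVect)$, hence by \cite{decoppet2024extension} the equivalence class of the extension $\fC$; conversely every fermionic strongly fusion 2-category arises this way by the structure theorem $\Omega\fC\cong\sVect$. The main obstacle I anticipate is not any single computation but the bookkeeping of \emph{which} $\Z/2$ in $\pi_2\sBrPic(\tsVect) = \Z/2\oplus\Z/2$ plays the role of $\kappa$ versus the role of the Majorana layer $\alpha$, and checking that the $k$-invariant $\Sq^2 + \kappa$ appearing in Definition \ref{def:supercohclass} is the genuine obstruction in the fibration rather than the untwisted $\Sq^2$; this is exactly the subtlety flagged in \cite{Johnson-Freyd:2022} about non-canonicity, and it is handled by \cite[Lemma 4.4]{decoppet2024extension} identifying the fibration \eqref{eq:fibration} with $B^2\cZ(\tsVect)\to B^2\Z/2$. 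Once that identification is invoked, the Proposition follows formally.
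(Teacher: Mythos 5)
Your proposal follows essentially the same route the paper takes: it quotes this as a result of \cite{decoppet2024extension} and justifies it, in the paragraphs preceding the Proposition, exactly as you do — via \cite[Theorem 3.11]{decoppet2024extension} classifying faithfully $G_b$-graded extensions of $\tsVect$ by maps $BG_b \to B\sBrPic(\tsVect)$, the homotopy groups \eqref{eq:homotopyBrPic} with supercohomology $k$-invariants, and \cite[Lemma 4.4]{decoppet2024extension} identifying the fibration \eqref{eq:fibration} with $B^2\cZ(\tsVect)^\times \to B^2\Z/2$ so that a map is the composite $\kappa$ together with a lift, i.e.\ a class in $\SH^{4+\kappa}(BG_b)$. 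Your bookkeeping remarks (which $\Z/2$ in $\pi_2$ records $\kappa$, and non-emptiness of the lift set via the canonical zero section) are consistent with the paper's treatment, so the proposal is correct and essentially identical in approach.
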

\begin{rem}
    Given the classification data in Proposition \ref{prop:stronglyF2C}, a fermionic strongly fusion 2-category naturally gives rise a twisted spin structure in Definition \ref{def:twistedspin}. In the case where we have a bundle $V\to X$ such that $w_2(V) = \kappa$, then we can form the Thom space $X^V$, and  the twisted supercohomology $\SH^{n+\kappa}(X)$ is equivalent to $\SH^n(X^V)$. In the case when there is no bundle, we use the fake vector bundle twist to arrive at twisted supercohomology of $X$.
\end{rem}

\begin{example}
    When $\alpha$ is trivial, then there is no obstruction to gauge the $\Z/2$ fermion parity symmetry in the fermionic QFT. Upon gauging, the dual symmetry will be a $\Z/2$ one-form summetry. The corresponding strongly fusion 2-category can be represented by $\mathbf{2Vect}^\pi_{\mathcal{G}}$ for $\mathcal{G} = B\Z/2 \textbf{.}_{\beta}\, G_b$ i.e. a fusion 2-category of $\pi$-twisted 2-group graded 2-vector space for $\pi \in \rH^4(B\mathcal{G};\mathbb{C}^\times)$.
    The Gu-Wen layer can now be thought of as the Postnikov class that gives rise to the 2-group extension.
    In the case of (3+1)d topological field theories, this corresponds to the statement that when the Majorana layer is trivial, then it is always possible to gauge fermion parity and arrive at a ``true'' gauge theory for a higher group, i.e. a Dijkgraaf-Witten theory where the action is given by $\pi$ \cite{Johnson-Freyd:2022}.
\end{example}

%%%%%%%%%%%%%%%%%%%%%%%%%%%%%%%%%%%%%%%%%%%%%%%%%%
\subsection{Shifting the cocycle}\label{subsection:shifting}
%%%%%%%%%%%%%%%%%%%%%%%%%%%%%%%%%%%%%%%%%%%%%%%%%%
 We now explain the process of stacking  by the $\Spin(n)_1$ TQFTs, condensing, and how it affects the supercohomology cocycle of the strongly fusion 2-category symmetry. This will be employing the well known Morita equivalence in fusion 2-categories between $\tsVect \boxtimes \Sigma \cC$ and $\tsVect$, where $\cC$ is a minimal modular extension of $\sVect$. We will make a choice to represent categories in this Morita equivalence with (2+1)d topological field theories.
 First off, we take $\cC$ to be given by $\Spin(n)_1$ where $n$ is an integer modulo 16. The theories $\Spin(n)_1$ has chiral central charge  $c \equiv \frac{n}{2} \mod 8$ as well as the following data:

 \begin{itemize}
     \item for $n\equiv 0 \mod 4$: 
     \begin{equation}\label{eq:Nevendata}
     \begin{tabular}{|c|c|c|c|c|c|}
\hline
$\text{Objects}$ & $1$ & $f$ & $e$ & $m$\\
\hline \\[-1.3em]
$\text{Spin}$ & $0$ & $\frac{1}{2}$ & $\frac{n}{16}$ & $\frac{n }{16}$\\
\hline
\end{tabular}
  \end{equation}
  The fusion rules are given by:
  \begin{equation}
      f \times f = e \times e= m \times m=1, \quad e\times m = f, \quad m \times f =e, \quad f \times e =m\,.
  \end{equation}
  There are three $\Z/2$ one-form symmetries generated by $f,e,m$. 
  \item for $n\equiv 2 \mod 4$: 
   \begin{equation}\label{eq:Nmod2data}
     \begin{tabular}{|c|c|c|c|c|c|}
\hline
$\text{Objects}$ & $1$ & $f$ & $a$ & $\overline{a}$\\
\hline \\[-1.3em]
$\text{Spin}$ & $0$ & $\frac{1}{2}$ & $\frac{n}{16}$ & $\frac{n }{16}$\\
\hline
\end{tabular}
  \end{equation}
  The fusion rules are given by:
  \begin{equation}
      f \times f = a \times \overline{a}=1, \quad a\times {a} =\overline{a}\times \overline{a} =  f, \quad a \times f =\overline{a}, \quad \overline{a} \times f =a\,.
  \end{equation}
  
  The one-form symmetry is $\Z/4$ with the $\Z/2$ subgroup generated by $f$.
     \item for $n\equiv 1 ,3 \mod 4$: 
      \begin{equation}\label{eq:Nodddata}
     \begin{tabular}{|c|c|c|c|c|c|}
\hline
$\text{Objects}$ & $1$ & $f$ & $\sigma$ \\
\hline \\[-1.3em]
$\text{Spin}$ & $0$ & 1 & $\frac{n}{16}$\\
\hline
\end{tabular}
  \end{equation}
  The fusion rules are given by:
  \begin{equation}\label{eq:fusionodd}
      f\times f = 1,\quad f\times \sigma = \sigma \times f = \sigma, \quad \sigma \times \sigma = 1+f\,.
  \end{equation}
  The one-form symmetry is $\Z/2$ generated by $f$.
 \end{itemize}
By gauging the fermion parity one-form symmetry in $\Spin(n)_1$ we can map them to $\SO(n)_1$ which are invertible fermionic field theories. The latter (super-modular) categories have central charge $\frac{n}{2} \mod 8$ and the property that $\SO(n)_1\boxtimes \SO(n')_1 \simeq \SO(n+n')_1$, and we will use them to represent the line operators for the symmetry $\fC$ which is a fermionic strongly fusion 2-category, and only consists of the vacuum and $\psi$. The interpretation of using $\SO(n)_1$ to represent the line operators goes back to Figure \ref{fig:twostep}. In particular the starting theory $\cT$ also can start off with a gravitational counter term (or gravitational anomaly), and that is related to the central charge of whichever $\SO(n)_1$ that we select to represent $\sVect$.

At the level of the (2+1)d theory $\mathcal{T}$ there is an obvious tensor product $\SO(n)_1 \boxtimes \Spin(m)_1$, which gives a fermionic theory. At the level of the symmetry i.e. strongly fusion 2-category, the analogous tensor product is given by $\Sigma \SO(n)_{1} \boxtimes \Sigma\Spin(m)_1$ which includes all the \text{condensations} in the sense of \cite{Gaiotto:2019xmp}. These are surface defects that are also present in $\mathcal{T}$ with the property that they can be ended on a line operator. In other words the surface defects are completely determined by the line defects.
We now condense the anyons in the algebra $\cA =1 \boxtimes 1 \oplus \psi \boxtimes f$, mathematically this amount to computing the  category of local modules of the algebra in $\SO(n)_1 \boxtimes \Spin(n)_1$. By condensing the anyons at the level of the theory, this will completely determine the structure of the symmetry.

\begin{lemma}
   Taking local modules of the algebra $\mathcal{A}=1 \boxtimes 1 \oplus \psi \boxtimes f$ in $\SO(n)_1 \boxtimes \Spin(m)_1$ gives $\SO(n+m)_1$.
\end{lemma}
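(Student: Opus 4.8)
The plan is to verify the claim by a direct computation with anyon condensation, treating $\mathcal{A} = 1\boxtimes 1 \oplus \psi\boxtimes f$ as a connected étale (commutative Frobenius) algebra in the braided fusion category $\SO(n)_1 \boxtimes \Spin(m)_1$ and identifying the category of local modules $(\SO(n)_1 \boxtimes \Spin(m)_1)^{\mathrm{loc}}_{\mathcal{A}}$. First I would check that $\mathcal{A}$ is indeed a connected étale algebra: it suffices to note that $\psi\boxtimes f$ has trivial quantum dimension, that $(\psi\boxtimes f)^{\otimes 2} = (\psi\otimes\psi)\boxtimes(f\otimes f) = 1\boxtimes 1$ in all three cases (odd $m$, $m\equiv 0$, $m\equiv 2 \bmod 4$) since $\psi$ is the transparent fermion and $f$ is order two, and that the topological spin of $\psi\boxtimes f$ is $\theta_\psi\cdot\theta_f = (-1)\cdot e^{2\pi i/2} = (-1)(-1) = 1$, so $\psi\boxtimes f$ is a boson, making $\mathcal{A}$ commutative and separable. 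This is exactly the boson being condensed in Construction \ref{construction:F2Cmodify}.

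The core step is a dimension count plus a spin/fusion-rule matching. Since $\dim\mathcal{A} = 1 + 1 = 2$, and $\dim(\SO(n)_1\boxtimes\Spin(m)_1) = \dim\SO(n)_1\cdot\dim\Spin(m)_1 = 2\cdot 4 = 8$ when $m$ is even and $= 2\cdot 4 = 8$ when $m$ is odd as well (using $\dim\Spin(m)_1 = 1 + 1 + (\sqrt 2)^2 = 4$ in the odd case and $1+1+1+1 = 4$ in the even case), the condensed theory has total dimension $\dim(\SO(n)_1\boxtimes\Spin(m)_1)/(\dim\mathcal{A})^2 = 8/4 = 2$. A nondegenerate braided fusion category of dimension $2$ containing only bosons-after-condensation with the transparent fermion is a super-modular category with two simple objects, hence of the form $\SO(k)_1$ for some $k \bmod 16$; the only thing left to pin down is $k$. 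I would determine $k$ by computing the chiral central charge: condensation preserves $c \bmod 8$ up to the standard subtlety, so $c_{\mathcal{A}\text{-loc}} \equiv c(\SO(n)_1) + c(\Spin(m)_1) = \tfrac n2 + \tfrac m2 = \tfrac{n+m}{2} \bmod 8$, which forces $k \equiv n+m \bmod 16$. To be careful one should instead track the spin of the surviving fermion directly: the unique nontrivial local $\mathcal{A}$-module is represented by $\psi\boxtimes 1 \cong 1\boxtimes f$ (these become isomorphic after tensoring with $\mathcal{A}$), and its spin is $\theta_\psi = -1$, confirming it is the transparent fermion of $\SO(n+m)_1$, with the central charge computation sealing which of the two super-modular categories with this property it is.

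The main obstacle I anticipate is handling the odd-$m$ case cleanly: there $\Spin(m)_1$ contains the non-invertible object $\sigma$ with $\sigma\otimes\sigma = 1 + f$, so the $\mathcal{A}$-module category is a priori more intricate, and one must check that $\sigma\boxtimes x$ (for $x$ the nontrivial object of $\SO(n)_1$, here written $\psi$) fails to be a \emph{local} module — its braiding with $\psi\boxtimes f$ picks up the nontrivial braiding phase of $\sigma$ with $f$ in $\Spin(m)_1$ — so it is projected out, leaving exactly the two local modules $\{1\boxtimes 1 \oplus \psi\boxtimes f,\ \psi\boxtimes 1 \oplus 1\boxtimes f\}$ with dimension $1$ each. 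An alternative, perhaps cleaner, route that sidesteps the case analysis is to invoke the Morita-theoretic statement recalled just before the lemma: $\tsVect\boxtimes\Sigma\Spin(m)_1 \simeq \tsVect$ as fusion 2-categories because $\Spin(m)_1$ is a minimal nondegenerate extension of $\sVect$, and then $\SO(n)_1$ represents a chosen such extension, so the composite $\SO(n)_1 \boxtimes_{\sVect} \Spin(m)_1$ (the condensation by $\mathcal{A}$ being precisely the relative tensor product over $\sVect$) is again a minimal nondegenerate extension, hence $\SO(k)_1$ for some $k$; the central charge additivity then gives $k = n+m \bmod 16$. I would present the dimension-count-plus-central-charge argument as the main proof and remark on the Morita interpretation, since the former is self-contained given the explicit data in \eqref{eq:Nevendata}, \eqref{eq:Nmod2data}, \eqref{eq:Nodddata}.
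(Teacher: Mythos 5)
Your proposal reaches the right conclusion and, in its careful branch, does essentially what the paper does: check that $\psi\boxtimes f$ is a condensable boson, identify the local modules explicitly (the new vacuum $\cA$ and the new fermion $\psi\boxtimes 1\oplus 1\boxtimes f$, with the $\sigma$-type modules excluded because their summands differ by half-integer spin), and fix which $\SO(k)_1$ appears by additivity of the central charge under $\boxtimes$ together with its invariance under condensation (Witt equivalence). However, the argument you designate as the \emph{main} proof has a gap: the formula $\dim\bigl(\cC^{\mathrm{loc}}_{\cA}\bigr)=\dim(\cC)/(\dim\cA)^2$ is a theorem for \emph{nondegenerate} braided fusion categories, whereas the ambient category $\SO(n)_1\boxtimes\Spin(m)_1$ is degenerate (super-modular, with transparent $\psi\boxtimes 1$), and in degenerate categories the formula can genuinely fail (e.g.\ condensing the regular algebra in $\Rep(\Z/2)$ with trivial braiding gives local modules of total dimension $1$, not $1/2$); relatedly, your phrase ``nondegenerate braided fusion category of dimension $2$ \dots with the transparent fermion'' is self-contradictory, since the output is super-modular, not nondegenerate. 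So the dimension count should be demoted to a consistency check, and the explicit enumeration of modules and their locality --- which you sketch in your ``to be careful'' aside and in the treatment of the odd-$m$ case, and which is exactly the paper's proof --- should carry the argument; with that reordering your write-up matches the paper's route, and your closing Morita-theoretic remark ($\SO(n)_1\boxtimes_{\sVect}\Spin(m)_1$ as composition of minimal nondegenerate extensions, with central charges adding) is a legitimate alternative packaging of the same computation.
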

\begin{proof}
The theory defined by  $\SO(n)_1 \boxtimes \Spin(m)_1$ is fermionic, and the algebra $\cA$ is bosonic. Hence the resulting theory after condensing this algebra must have a local fermion.
    We first compute the modules, and then the local modules following \cite{Yu:2021zmu}. The computations for the case when $n$ is even or add are similar, so we will present the odd case. The modules are given by 
    \begin{align}
        \mathcal{A} \otimes (1 \boxtimes 1 )&=  1\boxtimes 1 \oplus \psi \boxtimes f,\quad  \mathcal{A} \otimes (\psi \boxtimes 1) =  1 \boxtimes f \oplus \psi \boxtimes 1,  \\\notag 
        \mathcal{A} \otimes (1 \boxtimes f )&= \psi \boxtimes 1 \oplus 1 \boxtimes f ,\quad  \mathcal{A} \otimes (\psi \boxtimes f) = \psi \boxtimes f \oplus 1\boxtimes 1, \\ \notag 
         \mathcal{A} \otimes (1 \boxtimes \sigma )&= 1\boxtimes \sigma \oplus \psi \boxtimes \sigma ,\quad  \mathcal{A} \otimes (\psi \boxtimes \sigma) = \psi \boxtimes \sigma \oplus 1\boxtimes\sigma\,.
    \end{align}
    In the local modules, the algebra $\cA$ becomes the new vacuum line. The object $1\boxtimes f \oplus \psi \boxtimes 1$ comprises of objects that differ by an integer spin, and are thus included in the local modules where it become the local fermion. The object $\psi \boxtimes \sigma \oplus 1\boxtimes\sigma$ comprises of objects that differ $\frac{1}{2}$ in their spin, and thus not a local module.  
   The central charge is also preserved by Witt equivalence. Hence, in the result we must sum the ranks of the two inputs into the tensor product.
\end{proof}

\begin{theorem}\label{prop:shift}
    Let $\fC$ be a fermionic strongly fusion 2-category defined by a fermionic fusion 2-category defined by the bosonic group $G_b$, a class $\kappa \in \rH^2(BG_b;\Z/2)$, a supercohomology class $\varpi \in \SH^{4+\kappa}(BG_b)$ with nontrivial Majorana layer, and the data of a choice of $\SO(n)_1$ (for any $n \mod 16$) corresponding to $\sVect$. Applying Construction \ref{construction:F2Cmodify} on $\fC$ exhibits the following shift of $\varpi$ as a cocycle:
    \begin{itemize}
        \item if $\kappa=0$ then $\varpi$ has no shift,
        \item if $\kappa \neq 0$ and $\Sq^1\kappa =0$, then $\varpi$ has an order 2 shift,
        \item if $\kappa \neq 0$ and $\Sq^1\kappa \neq 0$, then $\varpi$ has an order 4 shift.
     \end{itemize}
\end{theorem}

\begin{proof}
    We first argue why $\alpha$ necessarily shifts to $\alpha+\kappa$ under one application of stacking and condensing the algebra $\mathcal{A}$. In $\fC$ the data of $\alpha$ is the extension of $\pi_0(\sVect)^\times \cong \Z/2$ by $G_b$ giving the group $G_f$, and $\kappa$ provides and action of $G_b$ on $\sVect$. We claim that upon stacking by $\Spin(n)_1$, the action of $G_b$ on $\Spin(n)_1$ must also be via the same class $\kappa$.
    In particular, $\kappa$ is how the $G_b$-action fractionalizes on the $\Z/2$ one-form symmetry generated by the fermion line in $\Spin(n)_1$. The fractionalization has the effect of changing the spin of the anyon $f$ \cite{Hsin:2019gvb}, and since our procedure requires us to condense $\psi \otimes f$, then to be sure that the composite is a boson, we must take the action on $\Spin(n)_1$ to be the same as the action on $\sVect$.

     Upon condensing the algebra given by $1\boxtimes 1\oplus \psi\boxtimes f$ it is known that the symmetry fractionalization data $\kappa$ becomes an extension for $G_b$ by the dual $\Z/2$ 0-form symmetry to the 1-form symmetry generated by $f$  \cite{Yu:2020twi,Tachikawa:2017gyf}. 
     %One can see this by noting that $\hom(B^2\Z/2;\mathbb{C}^\times)=\rH^2(B^2\Z/2;\mathbb{C}^\times) \cong \Z/2$ is the dual $\Z/2$ 0-form symmetry which  has to be identified with the $\Z/2$ coming from $\pi_0(\sVect)^\times$.
    Hence, the total extension data for $G_b$ by $\Z/2$ is $\alpha+\kappa$.  Thus, we see that $\alpha \mapsto \alpha+\kappa$ after one iteration of stacking and condensing, and each successive iteration adds on another copy of $\kappa$. Therefore $\alpha$ is 2-periodic if $\kappa$ is nontrivial.
    
   % In other words, it is a mixed anomaly between the $G_b$ symmetry and the $\Z/2$ one-form symmetry carried by the fermion line in $\Spin(n)_1$. 
    We now track how the shift of $\alpha$ propagates and affects the cochains $\beta$ and $\gamma$. Let $\beta' = \beta + \Delta \beta$, where $\Delta \beta$ labels the shift in $\beta$ after stacking and then condensing the algebra $\cA$. We will use Definition \ref{def:supercohclass}, which explains how $\beta$ is acted upon by differentials, to give the explicit form of $\Delta \beta$ in terms of $\alpha$ and $\kappa$. In particular we have
    \begin{align}
        d\beta' &=(\Sq^2+\kappa)(\alpha+\kappa)\\ \notag 
    &=(\alpha+\kappa)^2+\kappa\alpha+\kappa^2
    \\ \notag 
    &=\alpha^2+\kappa\alpha+\alpha\kappa+\kappa^2+\kappa\alpha+\kappa^2\\ \notag 
    &= \alpha^2+ \alpha\kappa,
    \end{align}
    where the multiplication between two cochains means applying $\cup$.
    Notice that $d\beta' = d\beta+d\Delta\beta$, hence $\Delta \beta$ takes a form such that $d\Delta \beta = \kappa \alpha + \alpha \kappa$. In particular, $\Delta \beta = \kappa \cup_1 \alpha$.\footnote{The expression for $\Delta \beta$ can be found universally, which means that there are no contributions from other closed forms.} Applying this procedure again would amount to letting $\beta'' = \beta + \Delta \beta + \Delta \beta'$, and finding the form of $\Delta \beta'$. In particular, $\Delta \beta' = \kappa \cup_1 (\alpha+\kappa)$, and therefore $\beta'' = \beta +\kappa \cup_1 \kappa = \beta + \Sq^1 \kappa$. If $\Sq^1\kappa$ is trivial, then $\beta$ is 2-periodic, and if $\Sq^1 \kappa$ is nontrivial, then $\beta$ is 4-periodic. The 4-periodicity appears essentially because iterating this process twice more will generate another copy of $\Sq^1\kappa$ that cancels the first and takes us back to $\beta$.
    
    For the cocycle $\gamma$, we note that in stacking and condensing twice we get
\begin{equation}\label{eq:gammadouble}
    d \gamma'' = i((\Sq^2 +\kappa)(\beta+\Sq^1\kappa))+f_\kappa(\alpha),
\end{equation}
 which is equivalent to $d \gamma$ if $\Sq^1 \kappa$ is trivial. Therefore $\gamma''$ and $\gamma$ differ by a term which is closed. In particular, let $\gamma'' = \gamma + \Delta \gamma + \Delta \gamma'$, where $\Delta \gamma + \Delta \gamma'$ is closed, and neither are themselves closed since by assumption each shift is a new cochain. But the only nontrivial degree 4 closed terms involving  $\cup$ or $\cup_1$ are $\alpha^2$, $\kappa\alpha$, $\alpha\kappa$, $\kappa^2$, and  linear combinations thereof. However none  can feed into the expression of $\gamma$ as they cannot be written as a the sum of two cochains. Therefore we conclude that $\Delta \gamma$ and $\Delta \gamma'$ cancel out in the sum and $\gamma$ is  2-periodic if $\Sq^1 \kappa$ is trivial.

 All that is left is to consider the periodicity of $\gamma$ in the case when $\Sq^1\kappa$ is nontrivial. We note that by Equation \ref{eq:gammadouble}, if we took $\beta \mapsto \beta+\Sq^1 \kappa$ and kept $\alpha$ invariant, then the right-hand-side would be equivalent to $d\gamma$, and  this implies that $d\gamma^{(4)} = d\gamma$.
 For shorthand, let $\Delta \Gamma = \Delta \gamma +\Delta \gamma'$, and $\Delta \Lambda = \Delta \gamma^{''} +\Delta \gamma^{'''}$. Then $\gamma^{(4)}$ takes the form $\gamma^{(4)}= \gamma + \Delta \Gamma+\Delta \Lambda$, where $\Delta \Gamma+\Delta \Lambda$ is closed. Then by the same logic used previously, we conclude that $\Delta \Gamma+\Delta \Lambda$ is trivial, and therefore $\gamma$ is 4-periodic if $\Sq^1\kappa$ is nontrivial.

\end{proof}

It is manifestly clear that applying Construction \ref{construction:F2Cmodify} can change the fusion 2-category and hence the symmetry of $\cT$. For example, if $\kappa=\alpha$ then the shift in $\alpha$ will trivialize the extension that defines $G_f$, leading to a different fusion structure of the symmetry operators.
A question now arises, concerning whether or not the theories $\cT$ and $\cT_{+\frac{n}{2}}$ are isomorphic if $n$ is not equivalent to 0  mod 16. If one claims that they are not isomorphic then having a different symmetry structure is perhaps not surprising. One could however claim that the two quantum field theories  are equivalent since a difference by the central charge cannot be distinguished without coupling to gravity or placing the theory on a manifold with boundary. This result is then a bit more startling as it would suggest that the same theory is compatible with multiple symmetries.

The resolution lies in allowing for the procedure of stacking by $\Spin(n)_1$ and condensing the boson to be part of the equivalence condition that defines a fermionic QFT $\mathcal{T}$. In light of there being nontrivial TQFTs in (2+1)-dimensions, it appears natural to generalize to this valent structure, that goes beyond the traditional framework of stacking with invertible TQFTs alone. See also \cite{Bhardwaj:2024xcx} for another example where nontrivial (2+1)d TQFTs makes an impact from the point of view of higher categorical symmetries involving self-duality. With this choice of valence, the full data of the symmetry would be all possible fusion 2-categories accessible via shifting $\varpi$ by the rules of Proposition \ref{prop:shift}. For this reason, we take as a slogan that the symmetries of a QFT should be a \textit{valence dependent notion} and in particular can be associated to a set of categories up to relations. In this sense, it is not unreasonable to expect that different physics may emerge when varying the valence structures, and we will see an examples of this in a following remarks.
The novel fact that we uncover is that in (2+1)-dimensions, even the underlying fusion rules for the symmetry can change. 
One could view the TQFT that was stacked on to have interacted in some way with the local degrees of freedom in $\cT$, hence changing their quantum numbers to be compatible with a new symmetry structure after condensing the theory away. 

To give a bit more physical context of valence, we provide the following two remarks.

\begin{rem}
Suppose $| \psi \rangle_d$ is a state for a $d$-spacetime-dimensional quantum system. One can consider the following equivalence:
    \begin{equation}\label{eq:valencestates}
        \mathcal{U}(| \psi \rangle_d \otimes |\text{triv}\rangle_{d-1}) \cong | \psi'\rangle_{d} \otimes | \text{triv}'\rangle_{d-1}.
    \end{equation}
    Here, the initial state $| \psi \rangle_d$ is said to be equivalent to the final state $| \psi' \rangle_d$  if it is possible to tensor $| \psi \rangle_d$  with  some locally trivial resource $|\text{triv}\rangle_{d-1}$,
    and evolve under the operator $\mathcal{U}$ to the state $| \psi' \rangle_d$, tensored also with a trivial resource. The resource state, or ancilla, $|\text{triv}\rangle_{d-1}$  is a state that is allowed to be inserted in a $d-1$-dimensional submanifold. The resource is local in the sense that it only occupies a submanifold rather than being allowed to extend to the entire $d$-manifold, and it is trivial in the sense that it can be connected to the trivial product state by some finite depth local unitary.  
    
    One could in principle alter the equivalence relation in \eqref{eq:valencestates} to be of the form
     \begin{equation}
        \mathcal{U}(| \psi \rangle_d \otimes |\phi \rangle_{d-1}) \cong | \psi'\rangle_{d}  \otimes |\phi' \rangle_{d-1}
    \end{equation}
   where the ancilla $|\phi \rangle_{d-1}$ need not be trivial, but could be an invertible state. This is analogous to how we propose a different valence and allow for stacking by nontrivial TQFTs, rather than only SPTs. In the context of (1+1)d quantum cellular automata that respects some global symmetry $G$, this latter type of equivalence resembles what is called a \textit{stable equivalence} rather than a strong equivalence.  Under a stable equivalence the ancilla state $|\phi \rangle$ is allowed to transform in an arbitrary representation of the $G$-symmetry, but a strong equivalence mandates that the ancilla  must transform in the same $G$-representation as the original state $| \psi \rangle$ \cite{Zhang:2023upl}. 
\end{rem}

\begin{rem}
There are exotic topological phases called fracton phases, first introduced in \cite{Haah:2011drr,haah2014bifurcation}. These phases can arise as topological defect networks. One views a  $d$-dimensional  TQFT as arising from introducing new interactions that are spatially localized on some lower $k<d$ dimensional region, which do not close the bulk gap \cite{Aasen:2020zru}.
In fractonic theories such as the X-cube model in (3+1)d, it is indeed useful to consider a different choice of equivalence under stacking by nontrivial TQFTs in (2+1)d. In some sense, fractons can be defined as those topological phases which arise when one looks for theories that are equivalent under this particular choice of valence. The interesting physical features that distinguishes them from the typical topological orders can also be attributed to the fact that they have this valence property, while more standard topological orders in the sense of \cite{Kong:2014qka,kong2015boundary,kong2017boundary,Lan_2018,Lan_2019,JF} are classified up to gapped boundary.
\end{rem}

\begin{rem}
    If we had not just taken the the $\kappa$ twist of supercohomology, but incorporated a degree 1 twist for the fusion 2-category by going along the lines of the general definition of twisted spin structure, then we would not see this phenomena. This is because if we had a  twisted orientation then the central charge would have to canonically be zero in order for the TQFT to be placable on unoriented manifolds. Furthermore, if we considered a fully bosonic QFT $\mathcal{T}$  with a bosonic strongly fusion 2-category as the symmetry, there is no such analog of Construction \ref{construction:F2Cmodify}.
\end{rem}

\begin{rem}\label{rem:fermionicTQFT}
   We will here make a distinction between the situations that use twisted supercohomology versus untwisted supercohomology. While categorical symmetries depends on classes in twisted supercohomology, fermionic TQFTs in (3+1)d which are associated to $\tsVect$-enriched braided fusion 2-categories and do not depend on the twisting $\kappa$. 
     More precisely we actually need to take $\kappa=0$ to define the TQFT. This is due to the fact that the category $\Omega \cZ(\tsVect^{\varpi}_{G_b}) = \Rep(G_b)\boxtimes \sVect$ if $\kappa=0$, for which we have a fully faithful functor $\sVect \to \Omega \cZ(\tsVect^{\varpi}_{G_b})$, which determines the trivial line and the fermion as the only objects that braid trivially with all other objects.
\end{rem}

%%%%%%%%%%%%%%%%%%%%%%%%%%%%%%%%%%%%%%%%%%%%%%%%%%
\section{Interactions of the Symmetry and TQFT}\label{section:TQFTandSym}
%%%%%%%%%%%%%%%%%%%%%%%%%%%%%%%%%%%%%%%%%%%%%%%%%%

In the last section we have seen how the structure of the fusion 2-category symmetry of a fermionic quantum field theory can be affected by stacking with TQFTs.
It is not surprising that the symmetry also affects which TQFTs are stackable onto the QFT, since the TQFTs are sensitive to the tangential structure of the background manifold that is set by the properties of the symmetry.

In this section we show that the conditions that are needed to exhibit a $\Z/2$ and $\Z/4$ periodicity of the cocycle $\varpi$ in Proposition \ref{prop:shift} exactly reflect the fact that some of the $\Spin(n)_1$ theories cannot be defined on manifolds with a twisted spin structure. By the discussion in \S\ref{section:fermionicsym} the twisted spin structure is set by the same conditions that give rise to the twisted supercohomology for which $\varpi$ is a cocycle. The exact same conditions also control the image of the map $F: \Mext(\cE) \rightarrow \Mext(\sVect)$; tying  together each of the three points will be the technical heart of Proposition \ref{prop:mainB}. 

The map $F$ arises from the following short exact sequences, the first is
\begin{equation}\label{eq:Mext}
    \begin{tikzcd}
        \Mext(\cE) \arrow[r] & \Witt \arrow[r]& \Witt(\cE)\,.
    \end{tikzcd}
\end{equation}
Here, $\Witt$ is the Witt group of non-degenerate braided fusion categories \cite{davydov2013witt}, and $\Witt(\cE)$ is the group of non-degenerate braided fusion categories over $\cE$ i.e. elements are a Witt class $[\cC]$ for $\cC$ a braided fusion category with $\cZ_{(2)}(\cC)\cong \cE$, and composition with respect to another class $[\cD]$ is given by $[\cC\boxtimes_\cE \cD]$ up to $\cE$-Witt equivalence. The second map in Equation \eqref{eq:Mext} is given by  base change, sending $[\cA] \mapsto [\cA \boxtimes \cE]$. In the case of $\cE = \sVect$ we have a sequence
\begin{equation}\label{eq:MextsVect}
    \begin{tikzcd}
        \Mext(\sVect) \arrow[r] & \Witt \arrow[r]& \Witt(\sVect)\,,
    \end{tikzcd}
\end{equation}
and due to the fact that there is a map  $(-)\boxtimes_\cE \sVect:\Witt(\cE)\rightarrow \Witt(\sVect)$, the map $\Mext(\cE) \to \mathcal{W}itt$ factors through  $\Mext(\sVect)$. If $\cE$ is super-Tannakian, it contains a maximal Tannakian subcategory $\cF$. If $\cC$ is a minimal extension of $\cE$ then the de-equivariantization by $\cC^0 = \cC \boxtimes_{\cF} \mathbf{Vect}$  is a minimal extension of $\sVect$.
The map $F:\Mext(\cE)\to \Mext(\sVect)$ assigns $\cC \mapsto \cC^0$, and is known as the \textit{central charge map}.

%%%%%%%%%%%%%%%%%%%%%%%%%%%%%%%%%%%%%%%%%%%%%%%%%%
\subsection{The (3+1)d TQFT for Fusion 2-Categories}
%%%%%%%%%%%%%%%%%%%%%%%%%%%%%%%%%%%%%%%%%%%%%%%%%%
In this section, we aim to build physical intuition for the map $F$ by unpacking its structure within the framework of a (3+1)-dimensional TQFT. Along the way, we revisit how fusion 2-categories relate to problems in 1-categories, shedding light on why shifts of the cocycle are relevant to the study of minimal non-degenerate extensions of braided fusion 1-categories.

For a braided fusion 1-category $\cB$ with Müger center $\cE$, a MNE of $\cB$ is a non-degenerate braided fusion category $\mathcal{C}$ such that $\cB \hookrightarrow \cC$. Furthermore, we require that the centralizer of $\cB$ in $\mathcal{C}$ i.e. objects which braid trivially with $\cB$ in $\cC$ are given by $\cE$. 

\begin{rem}
    A physical way of framing MNEs can be done from the lens of (2+1)d TQFTs. A non-degenerate braided fusion 1-category gives rise to what is known as a pure state TQFT. Such a theory has no influence from the environment, and therefore not subject to local decoherenece.
    On the other hand a braided fusion 1-category with nontrivial Müger center will give rise to a mixed state TQFT \cite{Sohal:2024qvq,Ellison:2024svg}.  These theories occur in the presence of interactions with background that disturb the pure state features. Being able to find an MNE for a degenerate braided category, means being able to find a \textit{purification} for the mixed state topological order.
\end{rem}

In the work of \cite{johnson2024minimal}, it was shown that a minimal non-degenerate extension of $\cB$ with Müger center $\sVect$ always exists. A large portion of the proof required an understanding of (3+1)d TQFTs, and how they can be constructed from the Drinfeld center of a fusion 2-category. We apply the methods of \cite{johnson2024minimal} to gain a better physical understanding of the last two points of Theorem \ref{prop:mainB} which concerns the map $F$ and the twisted spin structures.

Consider a  braided fusion 1-category $\cB$  with $\cZ_{(2)}(\cB)\cong \cE$, a Tannakian category of the form $\cE = \Rep(G_b)$.
In the proof of the existence of minimal non-degenerate extensions  one starts by establishing an isomorphism $\cZ(\Sigma \cB)\cong \cZ(\Sigma \cE)$. With this, one can form the non-degenerate braided category $\cC$, with $\cB \hookrightarrow \cC \hookleftarrow \cE^{rev}$, i.e. that $\cB$ and $\cE$ are centralizing pairs in $\cC$. The category $\cZ(\Sigma \cB)$ can be used to construct a (3+1)d TQFT which can be simplified by the following procedure that classifies the TQFT up to gapped boundary i.e. the category up to Morita equivalence: 
\begin{itemize}
    \item The lines in the TQFT are symmetric monoidal and given by $\cE$.  We can apply the fiber functor $\Rep(G_b)\rightarrow \mathbf{Vect}$. This creates a gapped boundary with a $G_b$-symmetry.
    \item By nondegeneracy, the TQFT must not have any genuine surface operators. This means there is a gapped boundary from the theory given by $\cZ(\Sigma \cB)$ to the vacuum. Surface operator can arise but only as condensation operators.
    \item $\cZ(\Sigma\cB)$ can be constructed by gauging the $G_b$-symmetry on the vacuum i.e. gauging the $G_b$-SPT. Hence $\cZ(\Sigma\cB)$ is a $G_b$-gauge theory, i.e. a Dijkgraaf-Witten theory with action $\pi \in \rH^4(BG_b;\bC^\times)$. 
\end{itemize}
If $\cZ(\Sigma \cB)$ is to be isomophic to $\cZ(\Sigma \cE)$ then the Dijkgraaf-Witten theory with action $\pi$ must be trivial, and therefore $\pi$ is the obstruction to a MNE. The choices of isomorphism for $\cZ(\Sigma \cB)\cong \cZ(\Sigma \cE )$ corresponds to the choices of minimal non-degenerate extension for $\cB$. 
Given a symmetric  fusion category $\cE$ there is a group $\Mext(\cE)$ defined to be the group of minimal non-degenerate extensions of $\cE$ \cite{lan2017modular}. For example, it is known that $\Mext(\sVect) \cong \Z/16$ \cite{bruillard2017fermionic}, giving Kitaev's 16-fold way \cite{Kitaev:2005hzj}, and other examples of this group have been computed for $\cE = \Rep(G_f)$ in \cite{nikshych2022computing}. 

We move on to consider when $\cE$ is just $\sVect$, then the theory constructed from $\cZ(\Sigma \sVect)\cong \cZ(\tsVect)$ is the theory of a dynamical spin structure. This category has two components, which we call the identity component and the \textit{magnetic} component. In the identity component, the only line operators are given by the vacuum and the emergent fermion. The simple nontrivial object is given by the surface operator $c$ which implements a  $\Z/2$ 1-form symmetry and arises as a condensation of the emergent fermion.  The magnetic component exists due to  non-degeneracy of the category. In this component there is a simple object $m$ that links nontrivially with the emergent fermion. 

We can realize $\tsVect$ as a topological boundary of $\cZ(\tsVect)$ by taking the Neumann boundary condition, which allows the spin structure to vary on the boundary.  For $\cD \in \Mext(\sVect)$ using the Morita equivalence $\tsVect \boxtimes \Sigma \cD \xrightarrow{\sim} \tsVect$, 
we get the following isomorphism on centers
\begin{equation}\label{eq:isocenter}
   \cZ(\tsVect \boxtimes \Sigma \cD ) \cong \cZ(\tsVect)\,.
\end{equation}
We view $\cZ(\tsVect)$ as an object in the proper Morita 5-category of braided fusion 2-categories, constructed following \cite[Section 8]{JFS}. The 1-morphisms of such a 5-category are given by monoidal (Karoubi-complete) $\mathbb{C}$-linear 2-category bimodules.
The product $\tsVect \boxtimes \Sigma \cD$ is a boundary for $\cZ(\tsVect \boxtimes \Sigma \cD )$, and by the isomorphism above, can be interpreted as a boundary for $\cZ(\tsVect)$ formed by stacking $\Sigma \cD$ on the Neumann boundary condition. Thus  $\tsVect \boxtimes \Sigma \cD$ is a bimodule and is furthermore an invertible 1-morphism of the 5-category. We see that stacking on $\Sigma \cD$ gives rise to an automorphism of $\cZ(\tsVect)$.  Automorphisms of $\cZ(\tsVect)$ where also studied in \cite{Barkeshli:2023bta} in the context of quantum computing. Pictorially this result is depicted in Figure \ref{fig:BosSandwich}.
\begin{figure}[ht]
\centering
 \begin{tikzpicture}[thick]
    
        % Dimensions
        \def\Depth{4}
        \def\DepthTwo{3}
        \def\Height{2}
        \def\Width{2}
        \def\Sep{3}        
        
        % 3d Manifold on double segment
        \coordinate (O) at (0.5,0,0);
        \coordinate (A) at (0.5,\Width,0);
        \coordinate (B) at (0.5,\Width,\Height);
        \coordinate (C) at (0.5,0,\Height);
        \coordinate (D) at (\Depth-.5,0,0);
        \coordinate (E) at (\Depth-.5,\Width,0);
        \coordinate (F) at (\Depth-.5,\Width,\Height);
        \coordinate (G) at (\Depth-.5,0,\Height);
        \draw[black] (O) -- (C) -- (G) -- (D) -- cycle;% Bottom Face
        \draw[black] (O) -- (A) -- (E) -- (D) -- cycle;% Back Face
        \draw[black, fill=white!60,opacity=0.9] (O) -- (A) -- (B) -- (C) -- cycle;% Left Face
        \draw[black, fill=yellow!20,opacity=0.8] (D) -- (E) -- (F) -- (G) -- cycle;% Right Face
        \draw[black] (C) -- (B) -- (F) -- (G) -- cycle;% Front Face
        \draw[black] (A) -- (B) -- (F) -- (E) -- cycle;% Top Face
        \draw[black, fill=green!20,opacity=0.8] (\Depth,0,0) -- (\Depth,\Width,0) -- (\Depth,\Width,\Height) -- (\Depth,0,\Height) -- cycle;
       % \draw[below] (0+1, 0*\Width, \Height) node{$\Sigma \cB$};
        \draw[below] (\Depth, 0*\Width, \Height) ;
        %node{$\cE$};
        \draw[midway] (\Depth/2,\Width-\Width/2,\Height/2) node {$\cZ(\tsVect)$};
        \draw[midway] (\Depth/2+3.5,\Width-\Width/2,\Height/2) node {$\cZ(\tsVect)$ };

         \draw[midway] (\Depth/2+.2,\Width-3.5,\Height/2-2) node {$\tsVect$ };

         \draw[midway] (\Depth/2+.2,\Width-1.5,\Height/2-6) node {$\Sigma \cD$ };

        \coordinate (O2) at (\Depth-.5,0,0);
        \coordinate (A2) at (\Depth-.5,\Width,0);
        \coordinate (B2) at (\Depth-.5,\Width,\Height);
        \coordinate (C2) at (\Depth-.5,0,\Height);
        \coordinate (D2) at (\Depth+\DepthTwo,0,0);
        \coordinate (E2) at (\Depth+\DepthTwo,\Width,0);
        \coordinate (F2) at (\Depth+\DepthTwo,\Width,\Height);
        \coordinate (G2) at (\Depth+\DepthTwo,0,\Height);
        \draw[black] (O2) -- (D2);
        \draw[black] (A2) -- (E2);
        \draw[black] (B2) -- (F2);
        \draw[black] (C2) -- (G2);

        \draw[black, fill=white!20,opacity=0.8] (\Depth+3,0,0) -- (\Depth+3,\Width,0) -- (\Depth+3,\Width,\Height) -- (\Depth+3,0,\Height) -- cycle;
        % \draw[below] (\Depth+3, 0*\Width, \Height) node{$\tsVect$};
       
    \end{tikzpicture}
 \caption{The yellow square denotes the Neumann boundary condition, giving rise the the category $\tsVect$ on the boundary. The green square denotes $\Sigma \cD$, and stacking is implemented by zooming out on the middle configuration. Together, the composite gives rise to a bimodule from the center to itself using Equation \eqref{eq:isocenter}.}
    \label{fig:BosSandwich}
\end{figure}
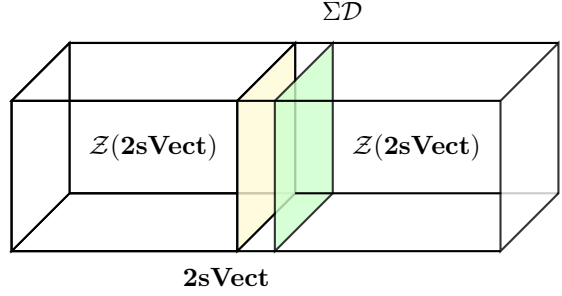

We turn to the case where $\cZ_{{2}}(\cB)\cong \cE$ is super-Tannakian. Categories $\cE$ is classified by fermionic groups, i.e.  $G_b$ together with the extension data $\kappa \in \rH^2(BG_b;\Z/2)$ that classifies the extension.  In the case when the extension is trivial, we use the fiber functor $\Rep(G_b)\rightarrow \mathbf{Vect}$ on the bosonic part, leaving just $\sVect$ in the Müger center. In the case when the extension is not trivial, then we use the super fiber functor $\Rep(G_f)\rightarrow \sVect$ on the Müger center.\footnote{It is expected that a similar argument that involves super fiber functors for $\mathbf{nRep}(G_f)$ can help classify higher dimensional topological orders, though the existence of such fiber functors has not been made rigorous. The existence of a 2-fiber functor for symmetric fusion 2-categories has been proven in \cite{Decoppet:2022dnz}.} In both cases, we obtain a gapped interface, as shown in  Figure \ref{fig:composeSandwich}, to $\cZ(\tsVect)$.
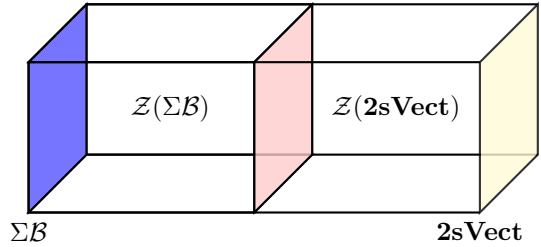
\begin{figure}[ht]
\centering
    \begin{tikzpicture}[thick]
    
        % Dimensions
        \def\Depth{4}
        \def\DepthTwo{3}
        \def\Height{2}
        \def\Width{2}
        \def\Sep{3}        
        
        % 3d Manifold on double segment
        \coordinate (O) at (0+1,0,0);
        \coordinate (A) at (0+1,\Width,0);
        \coordinate (B) at (0+1,\Width,\Height);
        \coordinate (C) at (0+1,0,\Height);
        \coordinate (D) at (\Depth,0,0);
        \coordinate (E) at (\Depth,\Width,0);
        \coordinate (F) at (\Depth,\Width,\Height);
        \coordinate (G) at (\Depth,0,\Height);
        \draw[black] (O) -- (C) -- (G) -- (D) -- cycle;% Bottom Face
        \draw[black] (O) -- (A) -- (E) -- (D) -- cycle;% Back Face
        \draw[black, fill=blue!60,opacity=0.9] (O) -- (A) -- (B) -- (C) -- cycle;% Left Face
        \draw[black, fill=red!20,opacity=0.8] (D) -- (E) -- (F) -- (G) -- cycle;% Right Face
        \draw[black] (C) -- (B) -- (F) -- (G) -- cycle;% Front Face
        \draw[black] (A) -- (B) -- (F) -- (E) -- cycle;% Top Face
        \draw[below] (0+1, 0*\Width, \Height) node{$\Sigma \cB$};
        \draw[below] (\Depth, 0*\Width, \Height) ;
        %node{$\cE$};
        \draw[midway] (\Depth/2+.5,\Width-\Width/2,\Height/2) node {$\cZ(\Sigma \cB)$};
        \draw[midway] (\Depth/2+3.5,\Width-\Width/2,\Height/2) node {$\cZ(\tsVect)$ };
        
        \coordinate (O2) at (\Depth,0,0);
        \coordinate (A2) at (\Depth,\Width,0);
        \coordinate (B2) at (\Depth,\Width,\Height);
        \coordinate (C2) at (\Depth,0,\Height);
        \coordinate (D2) at (\Depth+\DepthTwo,0,0);
        \coordinate (E2) at (\Depth+\DepthTwo,\Width,0);
        \coordinate (F2) at (\Depth+\DepthTwo,\Width,\Height);
        \coordinate (G2) at (\Depth+\DepthTwo,0,\Height);
        \draw[black] (O2) -- (D2);
        \draw[black] (A2) -- (E2);
        \draw[black] (B2) -- (F2);
        \draw[black] (C2) -- (G2);

        \draw[black, fill=yellow!20,opacity=0.8] (\Depth+3,0,0) -- (\Depth+3,\Width,0) -- (\Depth+3,\Width,\Height) -- (\Depth+3,0,\Height) -- cycle;
         \draw[below] (\Depth+3, 0*\Width, \Height) node{$\tsVect$};
       
    \end{tikzpicture}
 \caption{From the point of view of $\cZ(\tsVect)$, one can recover the TQFT given by $\cZ(\Sigma\cB)\cong \cZ(\tsVect^\varpi_{G_b})$  by gauging the $G_b$ or $G_f$-symmetry on the interface represented in red.}
    \label{fig:composeSandwich}
\end{figure}
One can view the components of $\cZ(\Sigma \cB) \cong \cZ(\tsVect^\varpi_{G_b})$ in this case as the extension of the 2 components arising from $\sVect$ with the components labeled by $G_b$. In this sense fermionic strongly fusion 2-category of the form  $\tsVect^\varpi_{G_b}$ are related to MNEs of $\cB$ with Müger center $\cE = \Rep(G_f)$, by looking at the center $\cZ(\tsVect^\varpi_{G_b})$.  In a similar manner, $\cZ(\tsVect)$ is related to MNEs of those braided categories with Müger center $\sVect$. One  can view $\cZ(\tsVect^\varpi_G)$ as the so called ``SymTFT'' for the fermionic strongly fusion 2-category symmetry $\tsVect^\varpi_{G_b}$.  In a similar way as the $\tsVect$ case, in  \cite[Remark 2.3.9]{Johnson-Freyd:2020twl} the author discusses that $\pi_0$ of the groupoid of automorphisms of $\cZ(\Sigma\cB)$ is  given by the group $\Mext(\cE)$, which was first conjectured in \cite{Kong:2020jne}, and the automorphisms are implemented by stacking with elements $\Sigma\cM$ for $\cM \in \Mext(\cE)$. We can think of stacking by these categories as overlaying on the red interface in Figure \ref{fig:composeSandwich}, similar to the configuration in Figure \ref{fig:BosSandwich}. In Figure \ref{fig:composeSandwich} we stack the right boundary of the theory $\cZ(\Sigma\cB )$ by $\Sigma \cM$, and apply the same procedure that we used to go from $\cZ(\Sigma\cB )$ to $\cZ(\tsVect)$. This is analogous to applying the map $F$ in the bulk, which on the boundary will map $\Sigma \cM$ to $\Sigma \Spin(n)_1$. This will not hit all of the $\Spin(n)_1$ theories, but we do get that:
\begin{corollary}\label{cor:autos}
   Let $\cB$ be a braided fusion 1-category with $\cZ_{(2)}(\cB) \cong \cE$ where $\cE$ is super-Tannakian. The $\Spin(n)_1$ theories that are in the image of the map $F:\Mext(\cE) \to \Mext(\sVect)$ arise from automorphism of 
     $\cZ(\Sigma \cB)$.
\end{corollary}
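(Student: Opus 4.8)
The plan is to leverage the structural picture of $\cZ(\tsVect)$ as a ``SymTFT'' sandwich that was just assembled, and to verify that the operation ``stack $\Sigma\cM$ on the interface, then apply the $G_b$- (or $G_f$-) gauging procedure that produced $\cZ(\tsVect)$ from $\cZ(\Sigma\cB)$'' is precisely the bulk avatar of the central charge map $F$. First I would recall, using the isomorphism $\cZ(\Sigma\cB)\cong\cZ(\tsVect^\varpi_{G_b})$ and the automorphism statement of \cite[Remark 2.3.9]{Johnson-Freyd:2020twl}, that every $\cM\in\Mext(\cE)$ gives rise to an invertible $1$-morphism $\tsVect^\varpi_{G_b}\boxtimes\Sigma\cM$ in the proper Morita $5$-category of braided fusion $2$-categories, realized as an overlay on the red interface in Figure~\ref{fig:composeSandwich}, exactly analogous to the $\cD\in\Mext(\sVect)$ overlay of Figure~\ref{fig:BosSandwich}. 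The point is that this overlay is an \emph{automorphism} of $\cZ(\Sigma\cB)$, since $\tsVect^\varpi_{G_b}\boxtimes\Sigma\cM$ is a bimodule from $\cZ(\tsVect^\varpi_{G_b})$ to itself that is invertible by the same Morita-equivalence argument used for $\cD$.

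Next I would carry out the reduction: applying the fiber functor $\Rep(G_b)\to\mathbf{Vect}$ (when $\kappa=0$) or the super fiber functor $\Rep(G_f)\to\sVect$ (when $\kappa\neq 0$) to the left boundary $\Sigma\cB$ collapses $\cZ(\Sigma\cB)$ onto $\cZ(\tsVect)$, and this same collapse applied to the boundary data $\Sigma\cM$ sends it to the de-equivariantization $\cM^0=\cM\boxtimes_\cF\mathbf{Vect}$, which by the discussion preceding the corollary is a minimal non-degenerate extension of $\sVect$, i.e.\ one of the sixteen theories $\Spin(n)_1$. Thus under the collapse map the automorphism of $\cZ(\Sigma\cB)$ induced by $\Sigma\cM$ becomes the automorphism of $\cZ(\tsVect)$ induced by $\Sigma\Spin(n)_1$ with $[\Spin(n)_1]=F([\cM])$. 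Since automorphisms of $\cZ(\tsVect)$ of this form are exactly those implemented by stacking with $\Sigma\cD$ for $\cD\in\Mext(\sVect)$, and the $n$ we obtain is $F([\cM])$ by construction of $F$ as the central charge map $\cM\mapsto\cM^0$, we conclude that a $\Spin(n)_1$ theory lies in the image of $F$ precisely when the corresponding $\Sigma\Spin(n)_1$-overlay arises as the collapse of some $\Sigma\cM$-automorphism of $\cZ(\Sigma\cB)$, hence from an automorphism of $\cZ(\Sigma\cB)$. This is exactly the assertion of the corollary.

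The main obstacle I anticipate is making rigorous the claim that the gauging/collapse operation intertwines the $\Mext(\cE)$-action on $\cZ(\Sigma\cB)$ with the $\Mext(\sVect)$-action on $\cZ(\tsVect)$ in a way that matches $F$ on the nose — i.e.\ that ``gauge $G_b$, then read off the residual boundary extension'' really computes $\cM\mapsto\cM^0$ rather than some twist of it. This requires tracking how the super fiber functor interacts with the $\boxtimes_\cF$ de-equivariantization, and checking compatibility at the level of the $2$-category bimodules defining the $1$-morphisms of the $5$-category (the $\Sigma\cM$ must remain invertible after collapse, and its image must be $\Sigma\cM^0$). I would handle this by working in the Morita $5$-category language of \cite{JFS} and invoking functoriality of $\cZ(\Sigma(-))$ together with the identification $\cZ(\tsVect^\varpi_{G_b})\to\cZ(\tsVect)$ as gauging, so that the $F$-compatibility becomes the statement that de-equivariantization and center commute appropriately — a fact implicit in the proof of existence of MNEs in \cite{johnson2024minimal}. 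The remaining subtlety, that $F$ is not surjective so ``not all $\Spin(n)_1$'' are hit, is automatic: only those $\cM$ that actually admit the super fiber functor reduction contribute, and the image is then cut out by the conditions on $\kappa$ already identified in Theorem~\ref{prop:mainA}.
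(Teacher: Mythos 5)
Your argument is essentially the paper's own: the paper justifies this corollary by the preceding discussion, namely that automorphisms of $\cZ(\Sigma\cB)$ are implemented by stacking $\Sigma\cM$ for $\cM\in\Mext(\cE)$ on the boundary (per \cite[Remark 2.3.9]{Johnson-Freyd:2020twl}), and that applying the fiber-functor/gauging collapse to $\cZ(\tsVect)$ sends this overlay to $\Sigma\Spin(n)_1$ with $[\Spin(n)_1]=F([\cM])$, exactly your two steps including the identification of $F$ with de-equivariantization $\cM\mapsto\cM^0$. Your proposal matches the paper's reasoning, with the same heuristic level of rigor regarding the compatibility of gauging with the $\Mext$-actions.
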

\noindent The automorphisms of $\cZ(\Sigma \cB)$ that in particular fix also the cocycle $\varpi$ associated to the symmetry on the boundary, have image in $\Mext(\sVect)$. This establishes the equivalence between \ref{point1} and \ref{point2} in Theorem \ref{prop:mainB}.  We will more rigorously show how to detect if a category in $\Mext(\sVect)$ can be coupled to the spin$\text{-}G_f$ structure in the next section, and therefore produce a map in the opposite direction of $F$ to go from a boundary of $\cZ(\tsVect)$ to a boundary of $\cZ(\tsVect^\varpi_{G_b})$. This will complete the proof of Theorem \ref{prop:mainB}. Intuitively, the symmetry boundary of $\cZ(\Sigma \cB)$ is given by $\tsVect^\varpi_{G_b}$ and therefore those $\Spin(n)_1$ in $\Mext(\sVect)$ which are in the image of $F$ are the theories that can consistently be coupled to the spin$\text{-}G_f$ structure, set by $\tsVect^\varpi_{G_b}$ and hence do not shift $\varpi$.

From the SymTFT point of view, even though the boundary symmetry of $
\cZ(\Sigma \cB)$ can change by stacking with $\Spin(n)_1$ and condensing, we claim that the SymTFT does not change. Unlike in Remark \ref{rem:fermionicTQFT}, the SymTFT is not strictly a fermionic theory and thus $\kappa$ is not necessarily trivial. Rather, it is constructed from a non-degenerate braided fusion 2-category. One way to approach the fact that the SymTFT is left invariant under the boundary manipulations is by observing that the Lagrangian description of the TQFT, viewed as a class in $\SH^{4+\kappa}(BG_b)$, should only be defined up to a shift of the Majorana layer \cite{Johnson-Freyd:2022}. Therefore it is not canonical. There is a long exact sequence 
\begin{equation}
    \begin{tikzcd}
       \ldots \arrow[r]&\mathrm{rSH}^{*}(BG_b) \arrow[r] & \SH^{*} (BG_b) \arrow[r,"M"] & \rH^{*-2}(BG_b;\Z/2) \arrow[r] &\ldots
    \end{tikzcd}
\end{equation}
where the map $M:\SH^{4+\kappa}(BG_b) \to \rH^2(BG_b;\Z/2)$  extracts the Majorana layer for the TQFT. The Majorana layer for the TQFT is only determined by the group $\rH^2(BG_b;\Z/2)/ \langle \kappa \rangle$, and therefore the TQFT is invariant upon stacking by $\Spin(n)_1$ on the boundary. As a result, the other two layers in the TQFT are unaffected.  In other words, possibly different spin$\text{-}G_f$ gauge theories, which depend on $\rH^2(BG_b;\Z/2)$, can give rise to the same TQFT.

%%%%%%%%%%%%%%%%%%%%%%%%%%%%%%%%%%%%%%%%%%%%%%%%%%
\subsection{Applications to Minimal Non-Degenerate Extensions}\label{subsection:applicationstoMNE}
%%%%%%%%%%%%%%%%%%%%%%%%%%%%%%%%%%%%%%%%%%%%%%%%%%
%\matt{We need to explain how the periodicity of the cocycle is to related to the image of the map $\Mext(\cE) \rightarrow \Mext(\sVect)$. Explain even what an MNE means. The motivation comes from the bulk theories to which $\Sigma \cE$ and $\tsVect$ are on the boundaries. The gauge theory in the latter case is given by .... , and in the former case is given by the center of something strongly fusion + dynamical spin structure.  }

%\matt{we could actually move this up. Have it as an extended remark to match with what is known in the literature.}
We have seen in the last section how the map $F:\Mext(\cE)\to\Mext(\sVect)$ arises in a physical way. It is instructive to connect our results for how $\varpi$ shifts to what is known about the map $F$ in \cite{nikshych2022computing}, where for certain $\cE$ some explicit examples of the subgroup in $\Mext(\sVect)$ have been computed. In particular we will connect with the following two Lemmas:
\begin{lemma}\cite[Proposition 5.4, Corollary 5.10]{nikshych2022computing}
    For $G_f =\Z/4$ and $G_f = \Z/2 \times \Z/2$ with $\cE = \Rep(G_f)$, then $\Mext(\cE)$ is isomorphic to $\Z/8$ and $\Z/16\times \Z/8$ respectively.
 \end{lemma}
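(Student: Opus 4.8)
The plan is to reduce everything to the cohomology of $B\Z/2$. Both $\Rep(\Z/4)$ and $\Rep(\Z/2\times\Z/2)$ are super-Tannakian with the \emph{same} bosonic quotient: writing $z\in G_f$ for fermion parity, $G_b:=G_f/\langle z\rangle\cong\Z/2$ in both cases, and the only difference is the extension class $\kappa\in\rH^2(B\Z/2;\Z/2)\cong\Z/2$ --- the nontrivial class $t^2$ (with $\rH^*(B\Z/2;\Z/2)=\Z/2[t]$) for $G_f=\Z/4$, and $0$ for $G_f=\Z/2\times\Z/2$, in which case $\Rep(\Z/2\times\Z/2)\simeq\Rep(\Z/2)\boxtimes\sVect$ as symmetric fusion categories. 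I would then compute $\Mext(\Rep G_f)$ from the exact sequence induced by the central-charge map $F$ of \S\ref{section:TQFTandSym},
\begin{equation*}
0\longrightarrow\ker F\longrightarrow\Mext(\Rep G_f)\xrightarrow{\ F\ }\operatorname{im}F\longrightarrow0 ,
\end{equation*}
so that it remains to identify $\operatorname{im}F\subseteq\Mext(\sVect)=\Z/16$, identify $\ker F$, and resolve this extension.

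For $\operatorname{im}F$ I would invoke Theorem \ref{prop:mainB}: $\operatorname{im}F$ is the set of $\Spin(n)_1$ that can be coupled to the spin-$G_f$ structure fixed by $\fC$, equivalently those for which Construction \ref{construction:F2Cmodify} leaves $\varpi$ invariant. When $\kappa=0$, Theorem \ref{prop:mainA} gives no shift and all sixteen $\Spin(n)_1$ are allowed, so $\operatorname{im}F=\Z/16$. When $\kappa=t^2\neq0$, we have $\Sq^1\kappa=\Sq^1 t^2=0$, so the shift is $2$-periodic, and intersecting the even-$n$ theories with the consistency conditions of \cite[Table 1]{Barkeshli:2021ypb} for a spin-$\Z/4$ background pins down $\operatorname{im}F$ as a proper cyclic subgroup of $\Z/16$.

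The subtler piece is $\ker F$, the group of minimal modular extensions of $\Rep(G_f)$ with vanishing chiral central charge --- the ``fermionic-SPT-like'' extensions. By the Drinfeld-center/Morita analysis of \S\ref{section:TQFTandSym} (following \cite{johnson2024minimal}), this is governed by the cohomology of $BG_b=B\Z/2$ via the layered data of Definition \ref{def:supercohclass} subject to its cocycle constraints $d\beta=(\Sq^2+\kappa)\alpha$ and $d\gamma=i(\Sq^2+\kappa)\beta+f_\kappa(\alpha)$. For $B\Z/2$ everything is explicit: $\rH^2(B\Z/2;\bC^\times)=\rH^4(B\Z/2;\bC^\times)=0$, $\rH^3(B\Z/2;\bC^\times)=\Z/2$, $\rH^i(B\Z/2;\Z/2)=\Z/2$, and $\Sq^1 t^n=n\,t^{n+1}$, $\Sq^2 t^n=\binom{n}{2}t^{n+2}$. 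One checks which layer tuples survive the constraints and resolves the resulting extension among the surviving $\Z/2$'s. In the case $\kappa=0$ the decomposition $\Rep(\Z/2\times\Z/2)\simeq\Rep(\Z/2)\boxtimes\sVect$ makes the $\Z/16$ of $\Mext(\sVect)$ a direct summand, so the exact sequence above splits and $\Mext(\Rep(\Z/2\times\Z/2))\cong\Z/16\times\ker F$; the cohomology computation then yields $\ker F\cong\Z/8$ and hence $\Z/16\times\Z/8$. In the case $\kappa=t^2\neq0$, the constraints eliminate more of the layered data, and assembling this with the image computed above gives $\Mext(\Rep\Z/4)\cong\Z/8$.

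The formal scaffolding --- the exact sequence for $F$ and the identification of $\operatorname{im}F$ via Theorems \ref{prop:mainA}--\ref{prop:mainB} --- is routine given the earlier material. The main obstacle is the last step: correctly enumerating the $c_-$-trivial minimal modular extensions of $\Rep(G_f)$, i.e.\ running the twisted-supercohomology spectral sequence for $B\Z/2$ while keeping track of the $\Sq^2$-differentials and of the non-multiplicativity of $\Mext$ under $\boxtimes$, and then deciding whether the extension of $\operatorname{im}F$ by $\ker F$ is split (it is when $\kappa=0$ by the $\boxtimes$-splitting, but this must be verified directly when $\kappa\neq0$). Both difficulties are manifestations of the non-canonical choice of minimal non-degenerate extension of $\sVect$ --- the torsor that motivates this paper --- so I would expect to settle them by explicit computation with pointed modular categories and quadratic forms rather than by a formal argument.
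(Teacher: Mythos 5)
Note first that the paper does not prove this lemma at all: it is imported verbatim from \cite{nikshych2022computing}, where it is established by direct computation with pointed braided categories and quadratic forms on metric groups. So your proposal is not competing with an internal argument; it has to stand on its own, and as it stands it does not. The scaffolding (the short exact sequence $0\to\ker F\to\Mext(\Rep G_f)\to\operatorname{im}F\to 0$ and the identification of $\operatorname{im}F$ via Theorems \ref{prop:mainA} and \ref{prop:mainB}) is fine, but the two steps that actually carry the content of the lemma are only asserted: (i) that the layered supercohomology count over $B\Z/2$ ``yields $\ker F\cong\Z/8$'' when $\kappa=0$ and a trivial kernel when $\kappa=t^2$, and (ii) that the extension of $\operatorname{im}F$ by $\ker F$ is resolved as claimed. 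Point (i) hides precisely the hard part: even granting that $\ker F$ is identified with gauged $c=0$ fermionic invertible phases (itself a statement needing proof of both injectivity and surjectivity of the gauging map onto the kernel), the cochain-level data $(\alpha,\beta,\gamma)$ with its constraints only gives a filtration with $\Z/2$ graded pieces; the fact that these assemble into $\Z/8$ (a threefold nonsplit extension) rather than, say, $\Z/4\times\Z/2$ or $(\Z/2)^3$ is exactly the extension problem you defer to ``explicit computation,'' and the triviality of the kernel for $G_f=\Z/4$ is likewise not obvious and not argued.

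Second, your splitting argument for $\kappa=0$ is flawed as stated. From $\Rep(\Z/2\times\Z/2)\simeq\Rep(\Z/2)\boxtimes\sVect$ you cannot conclude that the $\Z/16$ is a direct summand, because $\Rep(\Z/2)\boxtimes\Spin(n)_1$ is \emph{degenerate} (its M\"uger center contains $\Rep(\Z/2)$), hence is not a minimal nondegenerate extension, so the decomposition gives no section of $F$; and you yourself note that $\Mext$ fails to be multiplicative under $\boxtimes$ (indeed $\Mext(\Rep\Z/2)\times\Mext(\sVect)$ has order $32$ while the target group has order $128$), which undercuts the very mechanism you invoke. A correct section does exist, e.g.\ $\Spin(n)_1\mapsto\cZ(\cat{Vec}_{\Z/2})\boxtimes\Spin(n)_1$, but one must check minimality, that this assignment is a homomorphism for the $\Mext$ group law, and that its composite with the de-equivariantization $F$ is the identity---none of which appears in the proposal. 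Finally, be aware that deriving $\operatorname{im}F$ from Theorems \ref{prop:mainA} and \ref{prop:mainB} leans on the physically motivated SymTFT/automorphism arguments of \S\ref{section:TQFTandSym}; in the literature the logical flow runs the other way, with the image computed directly in \cite{nikshych2022computing} and quoted here as corroboration, so a self-contained proof of the lemma should not treat that step as free.
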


\begin{lemma}\label{lemma:dmitri}\cite[Corollary 5.5]{nikshych2022computing}
   The image of $F:\Mext(\cE)\rightarrow \Mext(\sVect)$ is given by $\Z/8$ and $\Z/16$ for $\cE = \Rep(\Z/4)$ and $\cE =\Rep( \Z/2 \times \Z/2)$ respectively.
\end{lemma}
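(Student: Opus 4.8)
The plan is to read off both statements from Theorem~\ref{prop:mainB} together with the cocycle-shift computation of Theorem~\ref{prop:mainA}, thereby recovering the image part of \cite[Corollary~5.5]{nikshych2022computing} from the stack-and-condense picture rather than from the Witt-group exact sequences (we do not reprove the values of $\Mext(\cE)$ themselves, only the image of $F$). By the equivalence of items (\ref{point1}) and (\ref{point2}) in Theorem~\ref{prop:mainB}, the image of $F\colon\Mext(\cE)\to\Mext(\sVect)\cong\Z/16$ is precisely the set of classes $[\Spin(n)_1]$ for which applying Construction~\ref{construction:F2Cmodify} to a fermionic strongly fusion 2-category $\fC$ with $\cZ_{(2)}(\Omega\fC)\cong\cE$ leaves the cocycle $\varpi$ invariant. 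So the first step is to record the data $(G_b,\kappa)$: for $\cE=\Rep(\Z/4)$ fermion parity is the subgroup $2\Z/4\subset\Z/4$, hence $G_b=\Z/2$ and $\kappa$ is the nonzero class of $\H^2(B\Z/2;\Z/2)\cong\Z/2$; for $\cE=\Rep(\Z/2\times\Z/2)$ the central extension $1\to\Z/2\to\Z/2\times\Z/2\to\Z/2\to 1$ is split, so $G_b=\Z/2$ and $\kappa=0$. Since $\H^2(B\Z/2;\Z/2)\cong\Z/2$, these exhaust the fermionic groups over $G_b=\Z/2$, so the subtlety of the remark following Theorem~\ref{prop:mainB}---that $(G_b,\kappa)$ need not determine $G_f$---does not arise here.

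The second step is the cohomology bookkeeping. Writing $x$ for the generator of $\H^1(B\Z/2;\Z/2)$ one has $\kappa=x^2$ and, from $\Sq(x^2)=(x+x^2)^2=x^2+x^4$, that $\Sq^1\kappa=0$. Feeding this into Theorem~\ref{prop:mainA}, the shift of $\varpi$ under stack-and-condense is $2$-periodic for $\cE=\Rep(\Z/4)$ and trivial for $\cE=\Rep(\Z/2\times\Z/2)$. The last step is to translate ``$s$-periodic shift'', $s\in\{1,2,4\}$, into ``image of $F$ of index $s$ in $\Z/16$'': since $[\Spin(n)_1]=n\,[\Spin(1)_1]$ in $\Mext(\sVect)$ and stacking the generator $\Spin(1)_1$ enacts the period-$s$ shift of Theorem~\ref{prop:mainA}, stacking $\Spin(n)_1$ leaves $\varpi$ invariant exactly when $s\mid n$, and these $[\Spin(n)_1]$ form the unique index-$s$ subgroup $s\Z/16\cong\Z/(16/s)$. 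Hence the image of $F$ is $\Z/8$ for $\cE=\Rep(\Z/4)$ and all of $\Z/16$ for $\cE=\Rep(\Z/2\times\Z/2)$, as claimed; this agrees with the independent route through item (\ref{point3}), where the $\Spin(n)_1$ couplable to a background spin-$G_f$ structure are exactly those allowed by the consistency conditions of \cite[Table~1]{Barkeshli:2021ypb}.

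I expect the main obstacle to lie not in the cohomology but in the step ``stacking $\Spin(n)_1$ leaves $\varpi$ invariant exactly when $s\mid n$'', i.e.\ in confirming that the period-$s$ shift of Theorem~\ref{prop:mainA} is realized by $\Spin(1)_1$ additively in $n$. The proof of Theorem~\ref{prop:shift} presents the per-iteration shift $\alpha\mapsto\alpha+\kappa$ (and its propagation to $\beta,\gamma$) without tracking the chosen $\SO$-representative, so one must check that one iteration with $\Spin(n)_1$ has the same effect on $\varpi$ as $n$ iterations with $\Spin(1)_1$; this uses the Lemma that local modules of $\cA$ in $\SO(k)_1\boxtimes\Spin(n)_1$ give $\SO(k+n)_1$, so that central charges add and stack-and-condense is compatible with the group law of $\Mext(\sVect)$. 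One must also handle $\Spin(n)_1$ with $n$ odd---where the condensable algebra and the resulting shift behave slightly differently---and reconcile this with the claim that such classes are not in the image of $F$ when $\kappa\neq0$. Once this compatibility with the group law is established, the three-step argument above goes through and reproduces \cite[Corollary~5.5]{nikshych2022computing}.
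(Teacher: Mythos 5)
Your derivation is correct as internal logic, but it is genuinely different from what the paper does: the paper does not prove Lemma \ref{lemma:dmitri} at all — it is imported verbatim from \cite[Corollary 5.5]{nikshych2022computing}, where it is established by purely algebraic means (the Witt-group sequences and the de-equivariantization description of the central charge map $F$), and the paper then uses it only as independent corroboration of Theorems \ref{prop:mainA} and \ref{prop:mainB} in the two examples with $G_b=\Z/2$. You reverse the logical flow and deduce the lemma from Theorems \ref{prop:mainA} and \ref{prop:mainB}. That is not circular within the paper, since the proof of Theorem \ref{prop:TFAE} uses Corollary \ref{cor:autos} and the consistency equations of Ansatz \ref{ansatz:invertiblespin} rather than Lemma \ref{lemma:dmitri}; your bookkeeping ($G_f=\Z/4$ gives $G_b=\Z/2$, $\kappa=x^2$, $\Sq^1\kappa=0$; $G_f=\Z/2\times\Z/2$ gives $\kappa=0$) matches exactly the computation the paper performs in its examples, just run in the opposite direction. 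What the two routes buy is different: the citation gives an independent, fully rigorous algebraic proof that serves as an external check on the paper's stack-and-condense formalism, while your route gives only an internal consistency statement whose strength is capped by the physical/heuristic level of rigor of the $(1)\Leftrightarrow(2)$ step in Theorem \ref{prop:mainB} (Corollary \ref{cor:autos}).

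The gap you flag yourself is the real one and should not be waved through: Theorem \ref{prop:shift} as stated and proved gives a per-iteration shift $(\alpha,\beta,\gamma)\mapsto(\alpha+\kappa,\dots)$ that does not visibly depend on which $\Spin(n)_1$ is stacked, whereas your translation ``stacking $\Spin(n)_1$ leaves $\varpi$ invariant iff $s\mid n$'' requires that one application of Construction \ref{construction:F2Cmodify} with $\Spin(n)_1$ equal $n$ applications with the generator, i.e.\ compatibility with the group law of $\Mext(\sVect)\cong\Z/16$. The paper only makes this $n$-dependence explicit later, inside the proof of Theorem \ref{prop:TFAE} (``$\Spin(2n)_1$ preserves $\varpi$, while the theories of odd rank shift $\varpi$''), where it is justified via the Barkeshli--Bulmash--Chen--Grossman--Hsin consistency equations rather than by the cocycle computation itself; the additivity of central charge in the local-modules lemma supports but does not by itself prove the needed multiplicativity of the shift. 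So your argument is complete only once that compatibility is established, and at that point it reproduces the image statement of \cite[Corollary 5.5]{nikshych2022computing} (but, as you note, not the values of $\Mext(\cE)$ themselves, which genuinely require the source).
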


\begin{example}
    Thanks to Lemma \ref{lemma:dmitri} if $G_b = \Z/2$ then in $\rH^*(B\Z/2;\Z/2) = \Z/2[x]$ where $|x|=1$, $\kappa =  x^2$. However this means that $\Sq^1\kappa=0$ and therefore the periodicity of $\varpi$ is 2 by Theorem \ref{prop:mainA}, which means that of the 16 possible $\Spin(n)_1$ theories, only 8 leave the cocycle invariant.
\end{example}

\begin{example}
    Taking $G_b=\Z/2$ with $\kappa=0$ implies $G_f$ is split. The image of $F$ is $\Z/16$ by Lemma \ref{lemma:dmitri} and indeed  all $\Spin(n)_1$ leave the cocycle invariant in this case.
\end{example}

\begin{rem}
    In addition, the image of $F$ has been studied for example in  \cite{galindo2017categorical} where the authors found that the image is $\Z/16$ when $G_f$ is a trivial extension.
\end{rem}

We now turn to the proof of our second main proposition, which provides physical insight into the various cases of the shift in $\varpi$ described in Proposition \ref{prop:shift}:
\begin{theorem}\label{prop:TFAE}
 Let $\fC$ be a fermionic strongly fusion 2-category parametrized by the following pieces of data. A finite group $G_b$, a class $\kappa \in \rH^2(BG_b;\Z/2)$ such that $G_f = G_b\,  \textbf{.}_{\kappa} \Z/2$, a class $\varpi \in \SH^{4+\kappa}(BG_b)$. Let $\cE = \Rep(G_f)$, then the following are equivalent:
 \begin{enumerate}
     \item  \label{thm:point1} The $\Spin(n)_1$ theories such that application of Construction \ref{construction:F2Cmodify} on $\fC$ stacked with $\Sigma \Spin(n)_1$ leaves $\varpi$ invariant.
     \item \label{them:point2} The image of $F:\Mext(\cE) \to \Mext(\sVect)$.
     \item \label{thm:point3} The $\Spin(n)_1$ that can be coupled to background spin$\text{-}G_f$ structure set by the symmetry $\fC$.
 \end{enumerate}
\end{theorem}
Before proving this, we first need to have a better understanding of which of the theories $\Spin(n)_1$ can be coupled to background spin$\text{-}G_f$ structure. This can be gleaned from the work of \cite{Barkeshli:2021ypb}, where the authors give the classification of $G_f$-symmetric invertible fermionic phases. In order to do the classification the authors bosonized the invertible fermionic theories by gauging the diagonal $\Z/2$ fermion parity symmetry in $\Spin_3 \times G_f$. This in effect sends an invertible fermionic topological theory to one the $\Spin(n)_1$ bosonic theories which is classified by the central charge $c$. The authors then studied the classification of the $G_b$-symmetry enrichment on top of these bosonic topological theories.

\begin{ansatz}\label{ansatz:invertiblespin}\cite{Barkeshli:2021ypb}
    Invertible fermionic phases are parametrized by a  quadruplet denoted by $(c,n_1,n_2,\nu_3) \in \frac{1}{2}\Z \times \rC^1(BG_b;\Z/2) \times \rC^2(BG_b;\Z/2)\times \rC^3(BG_b;\mathbb{C}/\Z)$, along with the following consistency equations that each of the cochains must satisfy:
\begin{align}
    dn_1&=0\,, \label{eq:consistent1}\\
    dn_2 &= n_1 w_2 + c \,\Sq^1 w_2 \mod 2, \label{eq:consistent2}\\
    d\nu_3 &= \mathcal{O}_4[c,n_1,n_2] \label{eq:consistent3}\,,
\end{align}
where $\mathcal{O}_4$ is a $\mathbb{C}^\times$ valued cochain thought of as a 't Hooft anomaly.\footnote{The authors of \cite{Barkeshli:2021ypb} also provide the classification for an additional anti-unitary symmetry.}
\end{ansatz}

The data $(c,n_1,n_2,\nu_3)$ is related to how various defects of different codimension can be decorated with lower dimensional fermionic states, and also encodes the braiding and fusion properties of the defects. For each choice of $c$, $(n_1,n_2,\nu_3)$ form a torsor over a group
extension involving subgroups of $\rH^1(BG_b,\Z/2)$,  $\rH^2(BG_b,\Z/2)$, and  $\rH^3(BG_b,\mathbb{C}/\Z)$. The choice of $n_1$ determines (up to gauge transformations) how each element of $G_b$ acts on the fusion and splitting spaces of objects in $\Spin(n)_1$. In the presence of the $G_b$ symmetry, the anyons carry fractional $G_b$ quantum numbers. This is encapsulated in phases $\eta_{a}(g,h)$ with respect to each anyon $a$ for all $g,h\in G_b$. The choice of $n_2$ concerns how to define $\eta_a(g,h)$ with respect to a canonical reference value so that $\eta_f(g,h) = (-1)^{w_2(g,h)}$, and changing $n_2$ corresponds to changing the outcome of fusing
$g$ and $h$ defects by $f$ due to the fractionalization.\footnote{There is a redundancy in defining $n_2$, given by $n_2\simeq n_2+ w_2$, see \cite[Section C]{Barkeshli:2021ypb}. This redundancy takes the same form as the shift of $\alpha$ by $\kappa$.} 
The class $w_2$ is defined with respect to the background manifold, but due to the presence of the $G_b$-bundle, is given by a class in $\rH^2(BG_b;\Z/2)$. In terms of the data used to describe fermionic strongly fusion 2-categories, this corresponds to $\kappa$. The choice $\nu_3$ concerns the $F$-symbols in the $G_b$-enriched theory. In order to make the $F$-symbols satisfy the pentagon equations exactly, even under changing the symmetry fractionalization class we need to supply the data of $\nu_3$ as a correction term. 
In conclusion, the classification of invertible phases (in the unitary case) has a dependency on the tangential structure twisted by the bundle $G_b$.

We will focus on the equation concerning $dn_2$, as it is the most crucial for probing whether or not a particular $\Spin(n)_1$ theory can be placed on a manifold with a certain tangential structure. If there is  an obstruction to being able to choose $n_1$ and $n_2$ that solves this equation, then that would be sufficient to rule this theory out as being consistent on the background. The specific form of $\mathcal{O}_4$ will not concern us as an inability to solve the last equation is not an inconsistency of the theory, but reflects the fact that the theory must live on the boundary for a bulk  (3+1)d theory. In \cite[Table 1]{Barkeshli:2021ypb} the authors work out how the values of $n$ put constraints on $w_2$.
\begin{lemma}\label{lem:oddc}
For odd $n$ the only consistent cocycle for $w_2$ in the equations of Ansatz \ref{ansatz:invertiblespin} is the trivial one.
\end{lemma}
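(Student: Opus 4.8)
The plan is to read off the constraint directly from the consistency equation \eqref{eq:consistent2} in Ansatz \ref{ansatz:invertiblespin}, specialized to the case of odd $n$. Recall from the data tabulated in \eqref{eq:Nodddata} that when $n$ is odd, the theory $\Spin(n)_1$ has the fermion line $f$ with spin $1$ (an integer!), rather than spin $\tfrac12$ as in the even cases \eqref{eq:Nevendata}, \eqref{eq:Nmod2data}. The first step is to explain why this forces $n_1 = 0$: the class $n_1$ records how $G_b$ permutes the fusion/splitting spaces, but when $n$ is odd there is a unique nontrivial boson-like candidate and no room for a nontrivial permutation action compatible with the $\Z/2$ one-form symmetry generated by $f$; more precisely, for odd $n$ one has $c \equiv \tfrac{n}{2} \pmod 8$ with $n$ odd, so $c$ is a half-integer and not an integer, and the structure of the modular data rules out a nontrivial $n_1 \in \rH^1(BG_b;\Z/2)$. (This matches the entries of \cite[Table 1]{Barkeshli:2021ypb} for odd $n$.)

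Second, I would plug $n_1 = 0$ into \eqref{eq:consistent2}, which then reads
\begin{equation}
    dn_2 \;=\; c\,\Sq^1 w_2 \pmod 2.
\end{equation}
Since for odd $n$ the chiral central charge $c = \tfrac{n}{2}$ is a half-integer, the coefficient $c$ appearing here is odd as an element detecting the mod-$2$ reduction (equivalently, $2c = n$ is odd), so this equation becomes $dn_2 = \Sq^1 w_2 \pmod 2$. For $n_2$ to exist as a $\Z/2$-cochain solving this, the right-hand side must be exact, i.e. we need $[\Sq^1 w_2] = 0$ in $\rH^3(BG_b;\Z/2)$. But $w_2$ here is a genuine cohomology class (it is $\kappa$), and $\Sq^1$ of a nonzero class of this type is generically nonzero; more to the point, the third step is to argue that the \emph{only} $w_2$ for which \eqref{eq:consistent2} admits a solution with $n_1 = 0$ and no further freedom is $w_2 = 0$. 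The cleanest way: if $w_2 \neq 0$, then since $n_1$ is forced to vanish, $dn_2 = \Sq^1 w_2$ has no solution unless $\Sq^1 w_2 = 0$; and even when $\Sq^1 w_2 = 0$, one must check whether the leftover obstruction (the $\mathcal{O}_4$ term, or rather the finer data ruling out a consistent extension) still permits $w_2 \neq 0$. Here I would invoke \cite[Table 1]{Barkeshli:2021ypb} directly, which records that for odd $n$ the entry forces the trivial $w_2$.

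The main obstacle I anticipate is the second half of the third step: showing that $\Sq^1 w_2 = 0$ alone is not enough, i.e. that even a $\Z/2$-cohomology class $w_2$ with vanishing $\Sq^1$ is still excluded for odd $n$. This is where one genuinely needs the finer structure of the Barkeshli--Yao classification rather than just the displayed equation \eqref{eq:consistent2} — specifically the interplay with how $n_2$ is only defined modulo the redundancy $n_2 \simeq n_2 + w_2$ (the footnote after Ansatz \ref{ansatz:invertiblespin}) and how the would-be fractionalization class must be compatible with the integer spin of $f$. I would handle this by a short direct argument: for odd $n$ the one-form symmetry is $\Z/2$ generated by $f$ with $\theta_f = 1$ (integer spin), so the symmetry fractionalization on $f$ is valued in $\rH^2(BG_b;\Z/2)$ with the reference $\eta_f(g,h) = (-1)^{w_2(g,h)}$ — but consistency of the braiding of $f$ (which is trivial, $B_{f,f}=1$) with this fractionalization, combined with $n_1 = 0$, forces $w_2$ to be cohomologically trivial. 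Alternatively, and more economically, I would simply cite the relevant row of \cite[Table 1]{Barkeshli:2021ypb} as having carried out exactly this case analysis, and present the above as the conceptual reason behind that table entry.
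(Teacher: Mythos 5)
Your proposal has genuine gaps, and it does not follow the paper's route. The central problem is the claim that $n_1=0$ is forced for odd $n$: you give only a heuristic ("no room for a nontrivial permutation action"), partly resting on the assertion that $f$ has integer spin for odd $n$. That assertion is false — the "$1$" in \eqref{eq:Nodddata} is a typo; $f$ has spin $\tfrac12$ in every $\Spin(n)_1$ (Construction \ref{construction:F2Cmodify} explicitly needs $\psi\boxtimes f$ to be a boson), so $\theta_f=-1$ and any step leaning on "integer spin of $f$" is unsound. Moreover $n_1$ is torsor data (it records Kitaev-chain--type decorations of codimension-one defects / the $G_b$-action on fusion spaces) and is not killed by the modular data of an Ising-type theory, so the reduction of \eqref{eq:consistent2} to $dn_2=c\,\Sq^1 w_2$ is unjustified. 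The subsequent step is also not legitimate: for odd $n$ one has $c=n/2\notin\Z$, so "$c\,\Sq^1 w_2 \bmod 2$" cannot simply be read as $\Sq^1 w_2$; the half-integral case is precisely where the displayed equation alone does not settle the question, which you concede before falling back on citing \cite[Table 1]{Barkeshli:2021ypb} — i.e., on assuming the content of the lemma rather than proving it.

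The paper's proof is different and much shorter: it identifies $w_2=\kappa$ with the symmetry-fractionalization class of the fermion line $f$ (via the proof of Proposition \ref{prop:shift}, where $\eta_f(g,h)=(-1)^{w_2(g,h)}$), and then observes that for odd $n$ the fusion rule $\sigma\times f=\sigma$ of \eqref{eq:fusionodd} is incompatible with a nontrivial fractionalization on $f$: since $\sigma$ absorbs $f$ (equivalently, the only abelian anyons are $\{1,f\}$ and both have trivial monodromy with $f$), the phases $\eta_f$ must be cohomologically trivial, forcing $w_2=0$. Your closing parenthetical — trivial self-monodromy of $f$ combined with the reference convention $\eta_f=(-1)^{w_2}$ — is the germ of essentially this argument, but as written it is entangled with the wrong spin assignment and the unjustified $n_1=0$, and is sketched rather than carried out; detaching it from those two ingredients and running it through the absorption $\sigma\times f=\sigma$ would recover the paper's proof.
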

\begin{proof}
The class $w_2$ is equal to $\kappa$ by the discussion above. 
So, by the proof of Proposition \ref{prop:shift}, it reflects the symmetry fractionalization of $f$ with respect to $G_b$. However by the operator content in \eqref{eq:Nodddata} the fractionalization is not compatible with the fact that $\sigma \times f = \sigma$   by \eqref{eq:fusionodd}.
\end{proof}

\begin{proof}[Proof of Proposition \ref{prop:TFAE}]
We will show that \ref{point1} and \ref{point3} are equivalent. Using the equivalence of \ref{point1} and \ref{point2} established in Corollary \ref{cor:autos} then completes the proof of the Proposition.
Clearly when $c\neq 0$ but $w_2=0$ and so $\kappa=0$, then we can trivially find $n_1$ and $n_2$ that solve Equations \eqref{eq:consistent1} and \eqref{eq:consistent2}. Therefore each of the 16 theories are consistent, and this agrees with the first bullet of Proposition \ref{prop:shift}. 

We now consider the case when $c$ is half integral. Then according to Lemma \ref{lem:oddc} these theories do not couple to a $G_b$-twisted spin structure. Let $c$ be integral so that $n$ is even but $\Sq^1w_2=0$. Then Equation \eqref{eq:consistent2} becomes  $dn_2 = n_1 w_2$ and can always be solved by taking $n_1$ to be trivial. This also matches with the fact that in Proposition \ref{prop:shift}, $\Spin(2n)_1$ preserves $\varpi$, while the theories of odd rank shift $\varpi$. Let $w_2 =\kappa \neq 0$ and $\Sq^1w_2\neq 0$, if $c$ is even then we can always solve Equation \eqref{eq:consistent2} since the term with $\Sq^1 w_2$ drops out, thus $\Spin(n)_1$ where $n \equiv 0 \mod 4$ are all consistent.  Therefore we are only left to consider the case when $c$ is odd, i.e. $n \equiv 2 \mod 4$. The ability to solve Equation \eqref{eq:consistent2} depends on the group $G_b$, and its $\Z/2$ cohomology, but there will be groups for which we cannot solve this equation. Any instance where there is a group $G_b$ with a degree 2 cohomology class $\kappa$ that twists $w_2(TM)$, and $\Sq^1 \kappa$ includes a nontrivial generator for the degree 3 cohomology of $G_b$, provides a sufficient obstruction. For example, let $G_b=S_4$ which has $\Z/2$ cohomology ring given by \cite{Adem}
\begin{equation}
    H(BS_4;\Z/2) = \Z/2[x,y,w]/\{xw=0\},\quad |x|=1, |y|=2, |w|=3\,.
\end{equation}
 If we consider a $(BS_4,0,y)$-twisted spin structure as in \S\ref{subsection:prelim}, then using the fact that $\Sq^1 y = x y+ w$ and plugging this into Equation \eqref{eq:consistent2} gives 
 \begin{equation}
     dn_2 = n_1 y + xy+w \mod 2
 \end{equation}
 and so the terms on the right must be cohomologically trivial. Since the generators $x$, $y$, and $w$ are not cohomologically trivial then we must choose $n_1$ such that the right hand side cancels out mod 2. However, this is impossible and so this consistency equation is never satisfied. We conclude that in the case where $\kappa\neq 0$ and $\Sq^1 \kappa \neq 0$ that only those $n =  0 \mod 4$ are consistent theories, agreeing with invariance of $\varpi$ in this case.
\end{proof}

%\begin{itemize}
 %   \item Remark on the implications of the SymTFT, which will be a spin$\text{-}G_f$ gauge theory.
  %   \item Remark about the relation to the fact that there is no twist of $w_1$ involved in the fusion 2-category and hence no time reversal involved. In this case you really can have different values of $c$. In the case when you have time-reversal, and you need to put your theory on an unorientable manifold, there is no choice but to have $c=0$. Our twist is by $\kappa$ which does not touch $w_1$ at all, so is oriented, and hence the central charge can be anything.
     
    % In the bosonic case, we claim there is no central charge issues and indeed you can see that the only twist you can do in that case is for the orientation, and this would reinforce that $c=0$ is the only consistent choice.
%\end{itemize}

% The $\Spin(n)_1$ that survive when coupled to spin$G_f$ structure are exactly the ones you stack with to obtain the periodicity in $\varpi$, and these are exactly the ones in the image of the map given by Dmitri.

%We give interpretation of these results that match with the consistence condition for invertible fermionic theories in \cite{Barkeshli:2021ypb}.

\bibliographystyle{alpha}
\bibliography{ref}
\end{document}